\newtheorem*{theorem*}{Theorem}
\newtheorem*{assumptions*}{Assumptions}
\newtheorem*{proposition*}{Proposition}
\title{Probabilistic Models for Integration Error in the Assessment of Functional Cardiac Models}
\author{
  Chris. J. Oates$^{1,5}$, Steven Niederer$^2$, Angela Lee$^2$, \\
  Fran\c{c}ois-Xavier Briol$^{3,4}$, Mark Girolami$^{4,5}$ \\
  $^1$Newcastle University, $^2$King's College London, $^3$University of Warwick, \\
  $^4$Imperial College London, $^5$Alan Turing Institute \\
  \texttt{chris.oates@ncl.ac.uk} \\
}
\begin{document}

\maketitle

\begin{abstract}
This paper studies the numerical computation of integrals, representing estimates or predictions, over the output $f(x)$ of a computational model with respect to a distribution $p(\mathrm{d}x)$ over uncertain inputs $x$ to the model.
For the functional cardiac models that motivate this work, neither $f$ nor $p$ possess a closed-form expression and evaluation of either requires $\approx$ 100 CPU hours, precluding standard numerical integration methods.
Our proposal is to treat integration as an estimation problem, with a joint model for both the a priori unknown function $f$ and the a priori unknown distribution $p$.
The result is a posterior distribution over the integral that explicitly accounts for dual sources of numerical approximation error due to a severely limited computational budget.
This construction is applied to account, in a statistically principled manner, for the impact of numerical errors that (at present) are confounding factors in functional cardiac model assessment.
\end{abstract}

\section{Motivation: Predictive Assessment of Computer Models}

This paper considers the problem of simulation-based assessment for computer models in general \cite{Craig2001}, motivated by an urgent need to assess the performance of sophisticated functional cardiac models \cite{TJP:TJP7214}.
In concrete terms, the problem that we consider can be expressed as the numerical approximation of integrals 
\begin{eqnarray}
p(f) & = & \int f(x) p(\mathrm{d}x), \label{eq:first equation}
\end{eqnarray}
where $f(x)$ denotes a functional of the output from a computer model and $x$ denotes unknown inputs (or `parameters') of the model.
The term $p(x)$ denotes a posterior distribution over model inputs.
Although not our focus in this paper, we note that $p(x)$ is defined based on a prior $\pi_0(x)$ over these inputs and training data $y$ assumed to follow the computer model $\pi(y|x)$ itself.
The integral $p(f)$, in our context, represents a posterior prediction of actual cardiac behaviour.
The computational model can be assessed through comparison of these predictions to test data generated from a real-world experiment.

The challenging nature of cardiac models -- and indeed computer models in general -- is such that a closed-form for both $f(x)$ and $p(\mathrm{d}x)$ is precluded \cite{Kennedy2001}.
Instead, it is typical to be provided with a finite collection of samples $\{x_i\}_{i=1}^n$ obtained from $p(\mathrm{d}x)$ through Monte Carlo (or related) methods \cite{Robert2013}.
The integrand $f(x)$ is then evaluated at these $n$ input configurations, to obtain $\{f(x_i)\}_{i=1}^n$.
Limited computational budgets necessitate that the number $n$ is small and, in such situations, the error of an estimator for the integral $p(f)$ based on the data $\{(x_i,f(x_i))\}_{i=1}^n$ is subject to strict information-theoretic lower bounds \cite{Novak2010}.
The practical consequence is that an unknown (non-negligible) numerical error is introduced in the numerical approximation of $p(f)$, unrelated to the performance of the model.
If this numerical error is ignored, it will constitute a confounding factor in the assessment of predictive performance for the computer model.
It is therefore unclear how a fair model assessment can proceed.
This motivates an attempt to understand the extent of numerical error in any estimate of $p(f)$.
This is non-trivial; for example, the error distribution of the arithmetic mean $\frac{1}{n} \Sigma_{i=1}^n f(x_i)$ depends on the unknown $f$ and $p$, and attempts to estimate this distribution solely from data, e.g. via a bootstrap or a central limit approximation, \emph{cannot succeed} in general when the number of samples $n$ is small, as argued in \cite{OHagan1987}.

Our first contribution, in this paper, is to argue that approximation of $p(f)$ from samples $\{x_i\}_{i=1}^n$ and function evaluations $\{f(x_i)\}_{i=1}^n$ can be cast as an estimation task.
Our second contribution is to derive a posterior distribution over the unknown value $p(f)$ of the integral.
This distribution provides an interpretable quantification of the extent of numerical integration error that can be reasoned with and propagated through subsequent model assessment.
Our third contribution is to establish theoretical properties of the proposed method.
The method we present falls within the framework of {\it Probabilistic Numerics} and our work can be seen as a contribution to this emerging area \cite{Hennig2015,Cockayne2017}.
In particular, the method proposed is reminiscent of {\it Bayesian Quadrature} (BQ) \cite{Diaconis1988,OHagan1991,Ghahramani2002,Osborne2012a,Gunter2014}.
In BQ, a Gaussian prior measure is placed on the unknown function $f$ and is updated to a posterior when conditioned on the information $\{(x_i,f(x_i))\}_{i=1}^n$.
This induces both a prior and a posterior over the value of $p(f)$ as push-forward measures under the projection operator $f \mapsto p(f)$.
Since its introduction, several authors have related BQ to other methods such as the `herding' approach from machine learning \cite{Huszar2012,Briol2015}, random feature approximations used in kernel methods \cite{Bach2015}, classical quadrature rules \cite{Saerkkae2016} and Quasi Monte Carlo (QMC) methods \cite{Briol2016}.
Most recently, \cite{Kanagawa2016} extended theoretical results for BQ to misspecified prior models, and \cite{Karvonen2017} who provided efficient matrix algebraic methods for the implementation of BQ.
However, as an important point of distinction, notice that BQ pre-supposes $p(\mathrm{d}x)$ is known in closed-form - it does not apply in situations where $p(\mathrm{d}x)$ is instead sampled.
In this latter case $p(\mathrm{d}x)$ will be called an {\it intractable} distribution and, for model assessment, this scenario is typical.

To extend BQ to intractable distributions, this paper proposes to use a Dirichlet process mixture prior to estimate the unknown distribution $p(\mathrm{d}x)$ from Monte Carlo samples $\{x_i\}_{i=1}^n$ \cite{Ferguson1973}.
It will be demonstrated that this leads to a simple expression for the closed-form terms which are required to implement the usual BQ.
The overall method, called {\it Dirichlet process mixture Bayesian quadrature} (DPMBQ), constructs a (univariate) distribution over the unknown integral $p(f)$ that can be exploited to tease apart the intrinsic performance of a model from numerical integration error in model assessment.
Note that BQ was used to estimate marginal likelihood in e.g. \cite{Osborne2012b}.
The present problem is distinct, in that we focus on predictive performance (of posterior expectations) rather than marginal likelihood, and its solution demands a correspondingly different methodological development.

On the computational front, DPMBQ demands a computational cost of $O(n^3)$.
However, this cost is de-coupled from the often orders-of-magnitude larger costs involved in both evaluation of $f(x)$ and $p(\mathrm{d}x)$, which form the main computational bottleneck.
Indeed, in the modern computational cardiac models that motivate this research, the $\approx$ 100 CPU hour time required for a single simulation limits the number $n$ of available samples to $\approx 10^3$ \cite{TJP:TJP7214}.
At this scale, numerical integration error cannot be neglected in model assessment.
This raises challenges when making assessments or comparisons between models, since the intrinsic performance of models cannot be separated from numerical error that is introduced into the assessment.
Moreover, there is an urgent ethical imperative that the clinical translation of such models is accompanied with a detailed quantification of the unknown numerical error component in model assessment.
Our contribution explicitly demonstrates how this might be achieved.

The remainder of the paper proceeds as follows:
In Section \ref{sec:BQ} we first recall the usual BQ method, then in Section \ref{sec:new bit} we present and analyse our novel DPMBQ method.
Proofs of theoretical results are contained in the electronic supplement.
Empirical results are presented in Section \ref{sec:results} and the paper concludes with a discussion in Section \ref{sec:discussion}.

\section{Probabilistic Models for Numerical Integration Error} \label{sec:methods}

Consider a domain $\Omega \subseteq \mathbb{R}^d$, together with a distribution $p(\mathrm{d}x)$ on $\Omega$.
As in Eqn. \ref{eq:first equation}, $p(f)$ will be used to denote the integral of the argument $f$ with respect to the distribution $p(\mathrm{d}x)$.
All integrands are assumed to be (measurable) functions $f : \Omega \rightarrow \mathbb{R}$ such that the integral $p(f)$ is well-defined.
To begin, we recall details for the BQ method when $p(\mathrm{d}x)$ is known in closed-form \cite{Diaconis1988,OHagan1991}:

\subsection{Probabilistic Integration for Tractable Distributions (BQ)} \label{sec:BQ}

In standard BQ \cite{Diaconis1988,OHagan1991}, a Gaussian Process (GP) prior $f \sim \text{GP}(m,k)$ is assigned to the integrand $f$, with mean function $m : \Omega \rightarrow \mathbb{R}$ and covariance function $k: \Omega \times \Omega \rightarrow \mathbb{R}$ \citep[see][for further details on GPs]{Rasmussen2006}.
The implied prior over the integral $p(f)$ is then the push-forward of the GP prior through the projection $f \mapsto p(f)$:
$$
p(f) \sim \mathrm{N}(p(m), p \otimes p(k))
$$
where $p \otimes p : \Omega \times \Omega \rightarrow \mathbb{R}$ is the measure formed by independent products of $p(\mathrm{d}x)$ and $p(\mathrm{d}x')$, so that under our notational convention the so-called \emph{initial error} $p \otimes p(k)$ is equal to $\iint k(x,x') p(\mathrm{d}x) p(\mathrm{d}x')$.
Next, the GP is conditioned on the information in $\{(x_i,f(x_i))\}_{i=1}^n$.
The conditional GP takes a conjugate form $f | X, f(X) \sim \text{GP}(m_n,k_n)$, where we have written $X = (x_1,\dots,x_n)$, $f(X) = (f(x_1),\dots,f(x_n))^\top$. 
Formulae for the mean function $m_n : \Omega \rightarrow \mathbb{R}$ and covariance function $k_n : \Omega \times \Omega \rightarrow \mathbb{R}$ are standard can be found in \citep[Eqns. 2.23, 2.24]{Rasmussen2006}.
The BQ posterior over $p(f)$ is the push forward of the GP posterior:
\begin{eqnarray}
p(f) \; | \; X, f(X) \sim \mathrm{N}(p(m_n), p \otimes p(k_n)) \label{BQ post meas}
\end{eqnarray}
Formulae for $p(m_n)$ and $p \otimes p (k_n)$ were derived in \cite{OHagan1991}:
\begin{eqnarray}
p(m_n) & = & f(X)^\top k(X,X)^{-1} \mu(X) \label{BQ mean} \\
p \otimes p(k_n) & = & p \otimes p(k) - \mu(X)^\top k(X,X)^{-1} \mu(X) \label{WCE}
\end{eqnarray}
where $k(X,X)$ is the $n \times n$ matrix with $(i,j)$th entry $k(x_i,x_j)$ and $\mu(X)$ is the $n \times 1$ vector with $i$th entry $\mu(x_i)$ where the function $\mu$ is called the \emph{kernel mean} or \emph{kernel embedding} \citep[see e.g.][]{Smola2007}:
\begin{eqnarray}
\mu(x) = \int k(x,x') p(\mathrm{d}x')
\label{kernel mean}
\end{eqnarray}
Computation of the kernel mean and the initial error each requires that $p(\mathrm{d}x)$ is known in general.
The posterior in Eqn. \ref{BQ post meas} was studied in \cite{Briol2016}, where rates of posterior contraction were established under further assumptions on the smoothness of the covariance function $k$ and the smoothness of the integrand.
Note that the matrix inverse of $k(X,X)$ incurs a (naive) computational cost of $O(n^3)$; however this cost is {\it post-hoc} and decoupled from (more expensive) computation that involves the computer model.

\subsection{Probabilistic Integration for Intractable Distributions} \label{sec:new bit}

The dependence of Eqns. \ref{BQ mean} and \ref{WCE} on both the kernel mean and the initial error means that BQ cannot be used for intractable $p(\mathrm{d}x)$ in general.
To address this we construct a second non-parametric model for the unknown $p(\mathrm{d}x)$, presented next.

\paragraph{Dirichlet Process Mixture Model} \label{DP sec}

Consider an infinite mixture model
\begin{eqnarray}
p(\mathrm{d}x)  = \int \psi(\mathrm{d}x ; \phi) P(\mathrm{d}\phi) , \label{DPMM}
\end{eqnarray}
where $\psi : \Omega \times \Phi \rightarrow [0,\infty)$ is such that $\psi(\cdot; \phi)$ is a distribution on $\Omega$ with parameter $\phi \in \Phi$ and $P$ is a mixing distribution defined on $\Phi$.
In this paper, each data point $x_i$ is modelled as an independent draw from $p(\mathrm{d}x)$ and is associated with a latent variable $\phi_i \in \Phi$ according to the generative process of Eqn. \ref{DPMM}.
i.e. $x_i \sim \psi(\cdot;\phi_i)$.
To limit scope, the extension to correlated $x_i$ is reserved for future work.

The Dirichlet process (DP) is the natural conjugate prior for non-parametric discrete distributions \cite{Ferguson1973}.
Here we endow $P(\mathrm{d}\phi)$ with a DP prior $P \sim \text{DP}(\alpha , P_b) $, where $\alpha > 0$ is a concentration parameter and $P_b(\mathrm{d}\phi)$ is a base distribution over $\Phi$.
The base distribution $P_b$ coincides with the prior expectation $\mathbb{E}[P(\mathrm{d}\phi)] = P_b(\mathrm{d}\phi)$, while $\alpha$ determines the spread of the prior about $P_b$.
The DP is characterised by the property that, for any finite partition $\Phi = \Phi_1 \cup \dots \cup \Phi_m$, it holds that $(P(\Phi_1), \dots, P(\Phi_m)) \sim \text{Dir}(\alpha P_b(\Phi_1), \dots, \alpha P_b(\Phi_m))$ where $P(S)$ denotes the measure of the set $S \subseteq \Phi$.
For $\alpha \rightarrow 0$, the DP is supported on the set of atomic distributions, while for $\alpha \rightarrow \infty$, the DP converges to an atom on the base distribution.
This overall approach is called a DP {\it mixture} (DPM) model \cite{Ferguson1983}.

For a random variable $Z$, the notation $[Z]$ will be used as shorthand to denote the density function of $Z$.
It will be helpful to note that for $\phi_i \sim P$ independent, writing $\phi_{1:n} = (\phi_1,\dots,\phi_n)$, standard conjugate results for DPs lead to the conditional
$$
P \; | \; \phi_{1:n} \sim \text{DP}\Big(\alpha + n , \frac{\alpha}{\alpha + n} P_b + \frac{1}{\alpha + n} \sum_{i=1}^n \delta_{\phi_i} \Big)
$$
where $\delta_{\phi_i}(\mathrm{d}\phi)$ is an atomic distribution centred at the location $\phi_i$ of the $i$th sample in $\phi_{1:n}$.
In turn, this induces a conditional $[\mathrm{d} p | \phi_{1:n}]$ for the unknown distribution $p(\mathrm{d}x)$ through Eqn. \ref{DPMM}.

\paragraph{Kernel Means via Stick Breaking} \label{stick sec}

The {\it stick breaking} characterisation can be used to draw from the conditional DP \cite{Sethuraman1994}.
A generic draw from $[P | \phi_{1:n}]$ can be characterised as
\begin{eqnarray}
P(\mathrm{d}\phi) = \sum_{j=1}^\infty w_j \delta_{\varphi_j}(\mathrm{d}\phi), \hspace{20pt}
w_j = \beta_j \prod_{j' = 1}^{j-1} (1 - \beta_{j'}) \label{stick}
\end{eqnarray}
where randomness enters through the $\varphi_j$ and $\beta_j$ as follows:
$$
\varphi_j \stackrel{\text{iid}}{\sim} \frac{\alpha}{\alpha + n} P_b + \frac{1}{\alpha + n} \sum_{i=1}^n \delta_{\phi_i}, \hspace{20pt} 
\beta_j \stackrel{\text{iid}}{\sim} \text{Beta}(1,\alpha+n)
$$
In practice the sum in Eqn. \ref{stick} may be truncated at a large finite number of terms, $N$, with negligible truncation error, since weights $w_j$ vanish at a geometric rate \cite{Ishwaran2001}.
The truncated DP has been shown to provide accurate approximation of integrals with respect to the original DP \cite{Ishwaran2002}.
For a realisation $P(\mathrm{d}\phi)$ from Eqn. \ref{stick}, observe that the induced distribution $p(\mathrm{d}x)$ over $\Omega$ is
\begin{eqnarray}
p(\mathrm{d}x) = \sum_{j=1}^\infty w_j \psi(\mathrm{d}x ; \varphi_j). \label{p sample}
\end{eqnarray}
Thus we have an alternative characterisation of $[p | \phi_{1:n}]$.

Our key insight is that one can take $\psi$ and $k$ to be a conjugate pair, such that both the kernel mean $\mu(x)$ and the initial error $p \otimes p(k)$ will be available in an explicit form for the distribution in Eqn. \ref{p sample} \citep[see Table 1 in][for a list of conjugate pairs]{Briol2016}.
For instance, in the one-dimensional case, consider $\varphi = (\varphi_1,\varphi_2)$ and $\psi(\mathrm{d}x ; \varphi) = \text{N}(\mathrm{d}x ; \varphi_1 , \varphi_2)$ for some location and scale parameters $\varphi_1$ and $\varphi_2$.
Then for the Gaussian kernel $k(x,x') = \zeta \exp(-(x-x')^2 / 2\lambda^2)$, the kernel mean becomes
\begin{eqnarray}
\mu(x) = \sum_{j=1}^\infty \frac{\zeta \lambda w_j}{(\lambda^2 + \varphi_{j,2})^{1/2}} \exp\Big( - \frac{(x-\varphi_{j,1})^2}{2(\lambda^2 + \varphi_{j,2})} \Big) \label{mean compute}
\end{eqnarray}
and the initial variance can be expressed as
\begin{eqnarray}
p \otimes p (k) & = & \sum_{j=1}^\infty \sum_{j'=1}^\infty \frac{\zeta \lambda w_j w_{j'}}{(\lambda^2 + \varphi_{j,2} + \varphi_{j',2})^{1/2}} \exp \Big( - \frac{(\varphi_{j,1} - \varphi_{j',1})^2}{2(\lambda^2 + \varphi_{j,2} + \varphi_{j',2})} \Big). \label{pop compute}
\end{eqnarray}
Similar calculations for the multi-dimensional case are straight-forward and provided in the Supplemental Information.

\paragraph{The Proposed Model} \label{proposed model}

To put this all together, let $\theta$ denote all hyper-parameters that (a) define the GP prior mean and covariance function, denoted $m_\theta$ and $k_\theta$ below, and (b) define the DP prior, such as $\alpha$ and the base distribution $P_b$.
It is assumed that $\theta \in \Theta$ for some specified set $\Theta$.
The marginal posterior distribution for $p(f)$ in the DPMBQ model is defined as
\begin{eqnarray}
[p(f) \; | \; X, f(X)] & = & \iint [p(f) \; | \; X, f(X), p , \theta ]  \; [\mathrm{d} p \; | \; X, \theta] \; [\mathrm{d} \theta] . \label{DPMBQ}
\end{eqnarray}
The first term in the integral is BQ for a fixed distribution $p(\mathrm{d}x)$.
The second term represents the DPM model for the unknown $p(\mathrm{d}x)$, while the third term $[\mathrm{d}\theta]$ represents a hyper-prior distribution over $\theta \in \Theta$.
The DPMBQ distribution in Eqn. \ref{DPMBQ} does not admit a closed-form expression.
However, it is straight-forward to sample from this distribution without recourse to $f(x)$ or $p(\mathrm{d}x)$.
In particular, the second term can be accessed through the law of total probabilities:
\begin{eqnarray*}
[\mathrm{d} p \; | \; X , \theta] & = & \int [\mathrm{d} p \; | \; \phi_{1:n}] \; [\phi_{1:n} \; | \; X , \theta] \; \mathrm{d}\phi_{1:n} 
\end{eqnarray*}
where the first term $[\mathrm{d} p \; | \; \phi_{1:n}]$ is the stick-breaking construction and the term $[\phi_{1:n} \; | \; X , \theta]$ can be targeted with a Gibbs sampler.
Full details of the procedure we used to sample from Eqn. \ref{DPMBQ}, which is de-coupled from the much larger costs associated with the computer model, are provided in the Supplemental Information.

\paragraph{Theoretical Analysis} \label{theory main text sec}

The analysis reported below restricts attention to a fixed hyper-parameter $\theta$ and a one-dimensional state-space $\Omega = \mathbb{R}$.
The extension of theoretical results to multiple dimensions was beyond the scope of this paper.

Our aim in this section is to establish when DPMBQ is ``consistent''.
To be precise, a random distribution $\mathbb{P}_n$ over an unknown parameter $\zeta \in \mathbb{R}$, whose true value is $\zeta_0$, is called {\it consistent} for $\zeta_0$ at a {\it rate} $r_n$ if, for all $\delta > 0$, we have $\mathbb{P}_n[(- \infty , \zeta_0 - \delta) \cup (\zeta_0 + \delta , \infty)] = O_P(r_n)$.
Below we denote with $f_0$ and $p_0$ the respective true values of $f$ and $p$; our aim is to estimate $\zeta_0 = p_0(f_0)$.
Denote with $\mathcal{H}$ the reproducing kernel Hilbert space whose reproducing kernel is $k$ and assume that the GP prior mean $m$ is an element of $\mathcal{H}$.
Our main theoretical result below establishes that the DPMBQ posterior distribution in Eqn. \ref{DPMBQ}, which is a random object due to the $n$ independent draws $x_i \sim p(\mathrm{d}x)$, is consistent:
\begin{theorem*}
Let $P_0$ denote the true mixing distribution. Suppose that:

\vspace{-8pt}
\begin{enumerate}[itemsep=-2mm]
\item $f$ belongs to $\mathcal{H}$ and $k$ is bounded on $\Omega \times \Omega$.
\item $\psi(\mathrm{d}x ; \varphi) = \text{\emph{N}}(\mathrm{d}x ; \varphi_1 , \varphi_2)$.
\item $P_0$ has compact support $\text{\emph{supp}}(P_0) \subset \mathbb{R} \times (\underline{\sigma} , \overline{\sigma})$ for some fixed $\underline{\sigma} , \overline{\sigma} \in (0,\infty)$.
\item $P_b$ has positive, continuous density on a rectangle $R$, s.t. $\text{\emph{supp}}(P_b) \subseteq R \subseteq \mathbb{R} \times [\underline{\sigma} , \overline{\sigma}]$.
\item $P_b(\{(\varphi_1,\varphi_2) :  |\varphi_1| > t \}) \leq c \exp(-\gamma |t|^\delta)$ for some $\gamma,\delta > 0$ and $\forall \; t > 0$.
\end{enumerate}

\vspace{-8pt}
Then the posterior $\mathbb{P}_n = [p(f) \; | \; X, f_0(X)]$ is consistent for the true value $p_0(f_0)$ of the integral at the rate $n^{-1/4 + \epsilon}$ where the constant $\epsilon > 0$ can be arbitrarily small.
\end{theorem*}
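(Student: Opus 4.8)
The plan is to bound the posterior tail mass $\mathbb{P}_n[(-\infty,\zeta_0-\delta)\cup(\zeta_0+\delta,\infty)]$ by a first moment of the posterior and then show that this moment vanishes at the claimed rate. Since $\mathbb{P}_n$ is a mixture over $p\sim[\mathrm{d}p\mid X,\theta]$ of the Gaussians $\mathrm{N}(p(m_n),p\otimes p(k_n))$ of Eqn.~\ref{BQ post meas}, Markov's inequality gives $\mathbb{P}_n[|p(f)-\zeta_0|>\delta]\leq \delta^{-1}\,\mathbb{E}_{p}\,\mathbb{E}_{f}\,|p(f)-\zeta_0|$, the inner expectation being over the BQ posterior for a fixed $p$. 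I would then split the error as $p(f)-\zeta_0 = [p(f)-p(m_n)] + [p(m_n)-p(f_0)] + [p(f_0)-p_0(f_0)]$, separating (i) the Gaussian fluctuation of the integral about its posterior mean, (ii) the BQ cubature bias at the true integrand $f_0$ for the random measure $p$, and (iii) the error from integrating $f_0$ against $p$ rather than the true $p_0$.

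The second step is to control all three terms through the BQ posterior standard deviation $e_n(p):=\sqrt{p\otimes p(k_n)}$ and the kernel-mean (maximum mean) discrepancy $\|\mu_p-\mu_{p_0}\|_{\mathcal{H}}$. Term (i) has standard deviation exactly $e_n(p)$. For term (ii), since $f_0,m\in\mathcal{H}$ (Assumption 1), the reproducing property gives $|p(m_n)-p(f_0)|\leq \|f_0-m\|_{\mathcal{H}}\,e_n(p)$, because $p(m_n)$ is the optimal-weight cubature estimate whose worst-case error over the unit ball of $\mathcal{H}$ is precisely $e_n(p)$. Term (iii) is $|\langle f_0,\mu_p-\mu_{p_0}\rangle_{\mathcal{H}}|\leq \|f_0\|_{\mathcal{H}}\,\|\mu_p-\mu_{p_0}\|_{\mathcal{H}}$. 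The key observation is that, since the optimal BQ weights minimise worst-case error among all weightings of the nodes $X$, the posterior variance is dominated by the discrepancy of $p$ to the empirical measure $\hat p_n=\tfrac1n\sum_{i}\delta_{x_i}$, i.e. $e_n(p)\leq \|\mu_p-\mu_{\hat p_n}\|_{\mathcal{H}}\leq \|\mu_p-\mu_{p_0}\|_{\mathcal{H}} + \|\mu_{p_0}-\mu_{\hat p_n}\|_{\mathcal{H}}$. As $k$ is bounded (Assumption 1), the Monte Carlo term obeys $\mathbb{E}\|\mu_{p_0}-\mu_{\hat p_n}\|_{\mathcal{H}}^2=O(n^{-1})$, hence $\|\mu_{p_0}-\mu_{\hat p_n}\|_{\mathcal{H}}=O_P(n^{-1/2})$ uniformly in $p$. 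Collecting terms, $\mathbb{P}_n[|p(f)-\zeta_0|>\delta]$ is at most a constant (depending on $\delta$, $\|f_0\|_{\mathcal{H}}$ and $\|f_0-m\|_{\mathcal{H}}$) times $\mathbb{E}_{p}\|\mu_p-\mu_{p_0}\|_{\mathcal{H}} + O_P(n^{-1/2})$.

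Everything therefore reduces to showing that the Dirichlet process mixture posterior for $p$ concentrates on $p_0$ in the kernel-mean metric. Here I would invoke posterior contraction for DP location–scale mixtures of Gaussians: Assumptions 2--5 (Gaussian components; compactly supported true mixing distribution with variances bounded away from $0$ and $\infty$; base measure with positive continuous density on a compatible rectangle; exponential tails) are exactly the hypotheses under which the density $p$ contracts to $p_0$ in Hellinger, and hence total-variation, distance at a rate $\epsilon_n$, in the style of Ghosal--van~der~Vaart. Because $k$ is bounded, the embedding is Lipschitz in total variation, $\|\mu_p-\mu_{p_0}\|_{\mathcal{H}}\lesssim \|p-p_0\|_{\mathrm{TV}}$, so the contraction transfers to the kernel-mean metric with no loss in the exponent; the posterior expectation $\mathbb{E}_{p}\|\mu_p-\mu_{p_0}\|_{\mathcal{H}}$ then inherits the rate $\epsilon_n$, the posterior mass outside a shrinking ball being exponentially negligible while $\|\mu_p-\mu_{p_0}\|_{\mathcal{H}}$ stays bounded by $2\sqrt{\sup k}$ there. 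Combining $\epsilon_n$ with the $O_P(n^{-1/2})$ Monte Carlo contribution yields consistency at the stated rate $n^{-1/4+\epsilon}$, the exponent $1/4$ being inherited from (and conservatively reflecting) the DP contraction input.

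The main obstacle is precisely this DP contraction step and its metric bookkeeping. One must (a) verify the prior-mass and entropy/sieve conditions of the contraction theorem under exactly Assumptions 3--5, handling the unbounded location axis via the tail bound of Assumption 5; (b) pass between Hellinger, total variation and the kernel-mean metric while controlling the final exponent up to the arbitrarily small $\epsilon$; and (c) manage the coupling between the random measure $p$ drawn from the posterior and the data-dependent nodes $X\sim p_0$ in the worst-case-error bound, which is what forces the empirical measure $\hat p_n$ and the $O_P(n^{-1/2})$ term to appear. By comparison, the Gaussian and reproducing-kernel manipulations underlying the first two steps are routine.
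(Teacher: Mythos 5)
Your proposal is correct and, at the one step that actually determines the exponent, it takes a genuinely different route from the paper. Both arguments rest on the same two external inputs: the Hellinger contraction $d_{\text{Hell}}(p_0,p) = O_P(n^{-1/2+\epsilon})$ for DP location--scale mixtures under assumptions (2)--(5), transferred to the RKHS metric via $\|\mu-\mu_0\|_{\mathcal{H}} \le \sqrt{\sup|k|}\,\|p-p_0\|_1 \le 2\sqrt{\sup|k|}\, d_{\text{Hell}}(p_0,p)$, and the $O_P(n^{-1/2})$ Monte Carlo rate for the empirical kernel mean $\hat{\mu}_{0,n} = \frac{1}{n}\sum_{i=1}^n k(\cdot,x_i)$. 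The paper then relates the DPMBQ posterior to the idealised BQ posterior through the decomposition
\begin{equation*}
\|\mu_0-\mu_n\|_{\mathcal{H}}^2 \;=\; \|\mu_0-\mu_{0,n}\|_{\mathcal{H}}^2 + \epsilon(X)^\top k(X,X)^{-1}\epsilon(X), \qquad \epsilon(X) = \mu(X)-\mu_0(X),
\end{equation*}
and bounds the second term by $n^{1/2}\|\mu-\mu_0\|_{\mathcal{H}}^2$ using Cauchy--Schwarz in $\mathcal{H}\otimes\mathcal{H}$; that factor $n^{1/2}$ is the sole origin of the exponent $1/4$. You instead control the conditional posterior spread by optimality of the BQ weights against the equal-weight rule, $e_n(p) = \min_w \|\mu-\textstyle\sum_i w_i k(\cdot,x_i)\|_{\mathcal{H}} \le \|\mu-\hat{\mu}_{0,n}\|_{\mathcal{H}} \le \|\mu-\mu_0\|_{\mathcal{H}} + \|\mu_0-\hat{\mu}_{0,n}\|_{\mathcal{H}}$, and handle the posterior-mean error with the same two quantities; your bias bound $|p(m_n)-p(f_0)|\le\|f_0-m\|_{\mathcal{H}}\,e_n(p)$ and discrepancy bound are exactly the reproducing-property computations the paper performs, so this is sound. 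What your route buys is the complete avoidance of the $n^{1/2}$ inflation: with the $n^{-1/2+\epsilon}$ DP input, every term in your decomposition is $O_P(n^{-1/2+\epsilon})$, a strictly sharper rate that implies the theorem a fortiori. Indeed, your argument exposes that the paper's tensor-space step is loose: $\epsilon(X)^\top k(X,X)^{-1}\epsilon(X)$ is the squared RKHS norm of the orthogonal projection of $\mu-\mu_0$ onto $\mathrm{span}\{k(\cdot,x_i)\}_{i=1}^n$, hence is at most $\|\mu-\mu_0\|_{\mathcal{H}}^2$ with no factor of $n^{1/2}$. The only flaw in your write-up is the final bookkeeping: you attribute the exponent $1/4$ to the ``DP contraction input,'' but the contraction theorem that assumptions (2)--(5) are tailored to invoke gives $n^{-1/2+\epsilon}$, under which your own chain of bounds delivers $n^{-1/2+\epsilon}$, not $n^{-1/4+\epsilon}$; you should state plainly that you obtain a stronger rate rather than hedge, noting that the theorem remains valid because its stated rate is only an upper bound.
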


\noindent The proof is provided in the Supplemental Information. 
Assumption (1) derives from results on consistent BQ \cite{Briol2016} and can be relaxed further with the results in \cite{Kanagawa2016} (not discussed here), while assumptions (2-5) derive from previous work on consistent estimation with DPM priors \cite{Ghosal2001}.
For the case of BQ when $p(\mathrm{d}x)$ is known and $\mathcal{H}$ a Sobolev space of order $s > 1/2$ on $\Omega = [0,1]$, the corresponding posterior contraction rate is $\exp(-Cn^{2s - \epsilon})$ \citep[Thm. 1]{Briol2016}.
Our work, while providing only an upper bound on the convergence rate, suggests that there is an increase in the fundamental complexity of estimation for $p(\mathrm{d}x)$ unknown compared to $p(\mathrm{d}x)$ known.
Interestingly, the $n^{-1/4 + \epsilon}$ rate is slower than the classical Bernstein-von Mises rate $n^{-1/2}$ \cite{VonMises1974}.
However, an out-of-hand comparison between these two quantities is not straight forward, as the former involves the interaction of two distinct non-parametric statistical models.
It is known Bernstein-von Mises results can be delicate for non-parametric problems \citep[see, for example, the counter-examples in][]{Diaconis1986}.
Rather, this theoretical analysis guarantees consistent estimation in a regime that is non-standard.

\section{Results} \label{sec:results}

The remainder of the paper reports empirical results from application of DPMBQ to simulated data and to computational cardiac models.

\subsection{Simulation Experiments}

To explore the empirical performance of DPMBQ, a series of detailed simulation experiments were performed.
For this purpose, a flexible test bed was constructed wherein the true distribution $p_0$ was a normal mixture model (able to approximate any continuous density) and the true integrand $f_0$ was a polynomial (able to approximate any continuous function).
In this set-up it is possible to obtain closed-form expressions for all integrals $p_0(f_0)$ and these served as a gold-standard benchmark.
To mimic the scenario of interest, a small number $n$ of samples $x_i$ were drawn from $p_0(\mathrm{d}x)$ and the integrand values $f_0(x_i)$ were obtained.
This information $X$, $f_0(X)$ was provided to DPMBQ and the output of DPMBQ, a distribution over $p(f)$, was compared against the actual value $p_0(f_0)$ of the integral.

For all experiments in this paper the Gaussian kernel $k$ defined in Sec. \ref{sec:new bit} was used; the integrand $f$ was normalised and the associated amplitude hyper-parameter $\zeta = 1$ fixed, whereas the length-scale hyper-parameter $\lambda$ was assigned a $\text{Gam}(2,1)$ hyper-prior.
For the DPM, the concentration parameter $\alpha$ was assigned a $\text{Exp}(1)$ hyper-prior.
These choices allowed for adaptation of DPMBQ to the smoothness of both $f$ and $p$ in accordance with the data presented to the method.
The base distribution $P_b$ for DPMBQ was taken to be normal inverse-gamma with hyper-parameters $\mu_0 = 0$, $\lambda_0 = \alpha_0 = \beta_0 = 1$, selected to facilitate a simplified Gibbs sampler.
Full details of the simulation set-up and Gibbs sampler are reported in the Supplemental Information.

For comparison, we considered the default 50\% confidence interval description of numerical error
\begin{eqnarray}
\Big(\bar{f} - t^* \frac{s}{\sqrt{n}} , \bar{f} + t^* \frac{s}{\sqrt{n}} \Big) \label{MC CI}
\end{eqnarray}
where $\bar{f} = n^{-1} \Sigma_{i=1}^n f(x_i)$, $s^2 = (n-1)^{-1}\Sigma_{i=1}^n (f(x_i) - \bar{f})^2$ and $t^*$ is the 50\% level for a Student's $t$-distribution with $n-1$ degrees of freedom.
It is well-known that Eqn. \ref{MC CI} is a poor description of numerical error when $n$ is small \citep[c.f. ``Monte Carlo is fundamentally unsound''][]{OHagan1987}.
For example, with $n=2$, in the extreme case where, due to chance, $f(x_1) \approx f(x_2)$, it follows that $s \approx 0$ and no numerical error is acknowledged.
This fundamental problem is {\it resolved through the use of prior information} on the form of both $f$ and $p$ in the DPMBQ method.
The proposed method is further distinguished from Eqn. \ref{MC CI} in that the distribution over numerical error is fully non-parametric, not e.g. constrained to be Student-$t$.

\begin{figure}[t!]
\centering
\begin{subfigure}[t]{0.45\textwidth}
\centering
\includegraphics[clip,trim = 0.5cm 11.2cm 7.5cm 8cm,width = \textwidth]{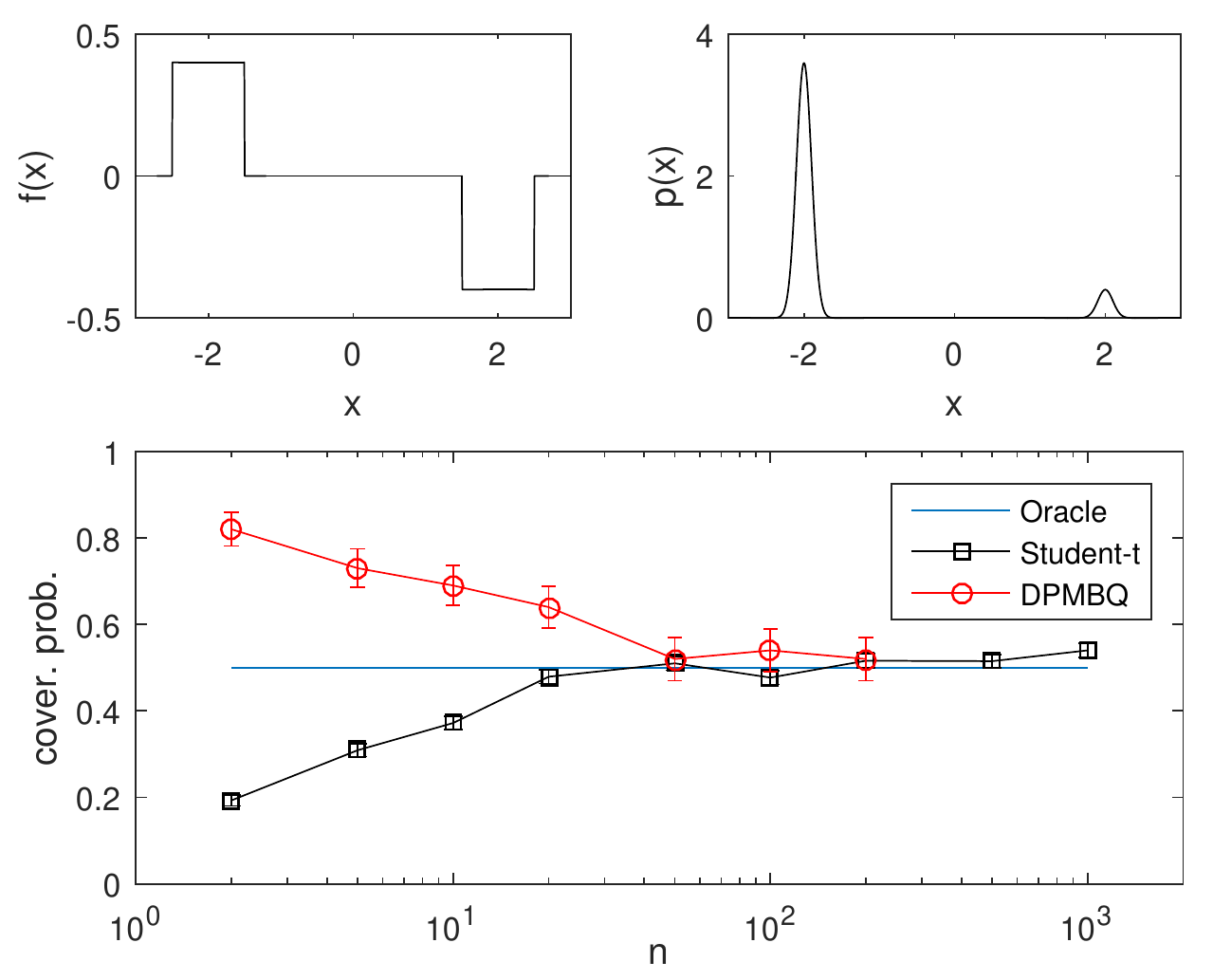}
\caption{}
\label{fig:sim results}
\end{subfigure} \hspace{50pt}
\begin{subfigure}[t]{0.41\textwidth}
\centering
\includegraphics[width = \textwidth]{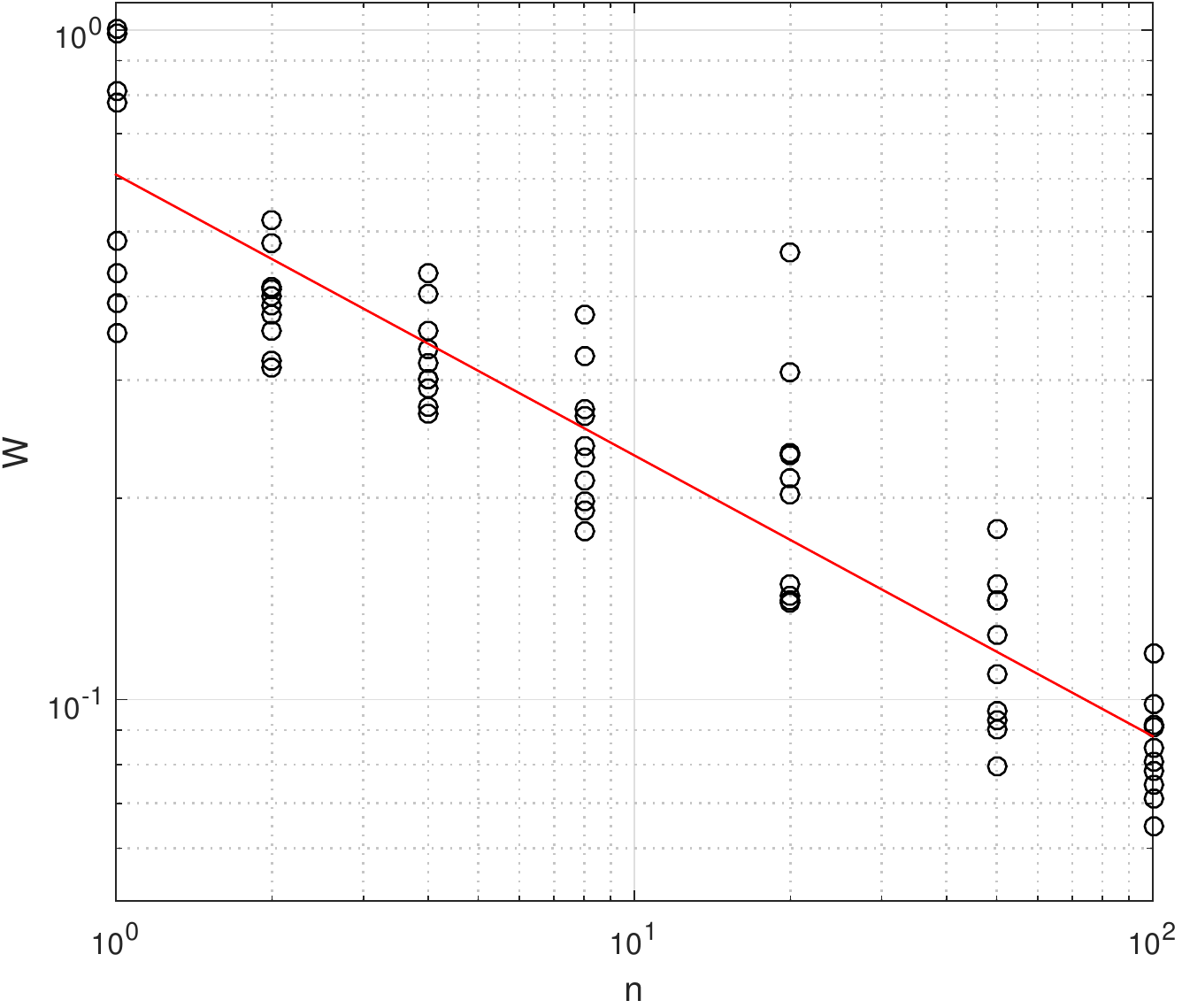}
\caption{}
\label{fig:sim results 2}
\end{subfigure}
\caption{Simulated data results.
(a) Comparison of coverage frequencies for the simulation experiments.
(b) Convergence assessment:
Wasserstein distance ($W$) between the posterior in Eqn. \ref{DPMBQ} and the true value of the integral, is presented as a function of the number $n$ of data points.
[Circles represent independent realisations and the linear trend is shown in red.]
}
\end{figure}

\paragraph{Empirical Results}

Coverage frequencies are shown in Fig. \ref{fig:sim results} for a specific integration task $(f_0,p_0)$, that was deliberately selected to be difficult for Eqn. \ref{MC CI} due to the rare event represented by the mass at $x = 2$.
These were compared against central 50\% posterior credible intervals produced under DPMBQ.
These are the frequency with which the confidence/credible interval contain the true value of the integral, here estimated with 100 independent realisations for DPMBQ and 1000 for the (less computational) standard method (standard errors are shown for both).
Whilst it offers correct coverage in the asymptotic limit, Eqn. \ref{MC CI} can be seen to be over-confident when $n$ is small, with coverage often less than $50\%$.
In contrast, DPMBQ accounts for the fact $p$ is being estimated and provides conservative estimation about the extent of numerical error when $n$ is small.

To present results that do not depend on a fixed coverage level (e.g. 50\%), we next measured convergence in the Wasserstein distance:
$$
W = \int |p(f) - p_0(f_0)| \; \mathrm{d}[p(f) \; | \; X, f(X)]
$$
In particular we explored whether the theoretical rate of $n^{-1/4 + \epsilon}$ was realised.
(Note that the theoretical result applied just to fixed hyper-parameters, whereas the experimental results reported involved hyper-parameters that were marginalised, so that this is a non-trivial experiment.)
Results in Fig. \ref{fig:sim results 2} demonstrated that $W$ scaled with $n$ at a rate which was consistent with the theoretical rate claimed.

Full experimental results on our polynomial test bed, reported in detail in the Supplemental Information, revealed that $W$ was larger for higher-degree polynomials (i.e. more complex integrands $f$), while $W$ was insensitive to the number of mixture components (i.e. to more complex distributions $p$).
The latter observation may be explained by the fact that the kernel mean $\mu$ is a smoothed version of the distribution $p$ and so is not expected to be acutely sensitive to variation in $p$ itself.

\subsection{Application to a Computational Cardiac Model}

\paragraph{The Model}

The computation model considered in this paper is due to \cite{Lee2016} and describes the mechanics of the left and right ventricles through a heart beat. 
In brief, the model geometry (Fig. \ref{fig:cardiac model}, top right) is described by fitting a C$^1$ continuous cubic Hermite finite element mesh to segmented magnetic resonance images (MRI; Fig. \ref{fig:cardiac model}, top left). Cardiac electrophysiology is modelled separately by the solution of the mono-domain equations and provides a field of activation times across the heart. The passive material properties and afterload of the heart are described, respectively, by a transversely isotropic material law and a three element Windkessel model. 
Active contraction is simulated using a phenomenological cellular model, with spatial variation arising from the local electrical activation times. 
The active contraction model is defined by five input parameters: $t_r$ and $t_d$ are the respective constants for the rise and decay times, $T_0$ is the reference tension, $a_4$ and $a_6$ respectively govern the length dependence of tension rise time and peak tension. These five parameters were concatenated into a vector $x \in \mathbb{R}^5$ and constitute the model inputs. 

The model is fitted based on training data $y$ that consist of functionals $g_j : \mathbb{R}^5 \rightarrow \mathbb{R}$, $j = 1,\dots,10$, of the pressure and volume transient morphology during baseline activation and when the heart is paced from two leads implanted in the right ventricle apex and the left ventricle lateral wall. 
These 10 functionals are defined in the Supplemental Information; a schematic of the model and fitted measurements are shown in Fig. \ref{fig:cardiac model} (bottom panel).

\paragraph{Test Functions}

The distribution $p(\mathrm{d}x)$ was taken to be the posterior distribution over model inputs $x$ that results from an improper flat prior on $x$ and a squared-error likelihood function:
$
\log p(x) = \text{const.} + \frac{1}{0.1^2} \sum_{j=1}^{10} (y_j - g_j(x))^2.
$
The training data $y = (y_1,\dots,y_{10})$ were obtained from clinical experiment.
The task we considered is to compute posterior expectations for functionals $f(x)$ of the model output produced when the model input $x$ is distributed according to $p(\mathrm{d}x)$.
This represents the situation where a fitted model is used to predict response to a causal intervention, representing a clinical treatment.

For assessment of the DPMBQ method, which is our principle aim in this experiment, we simply took the test functions $f$ to be each of the physically relevant model outputs $g_j$ in turn (corresponding to no causal intervention).
This defined 10 separate numerical integration problems as a test bed.
Benchmark values for $p_0(g_j)$ were obtained, as described in the Supplemental Information, at a total cost of $\approx$ $10^5$ CPU hours, which would not be routinely practical.

\begin{figure}[t!]
\begin{subfigure}[t]{0.4\textwidth}
\includegraphics[clip,trim = 0cm 0cm 0cm 0cm,width = \textwidth]{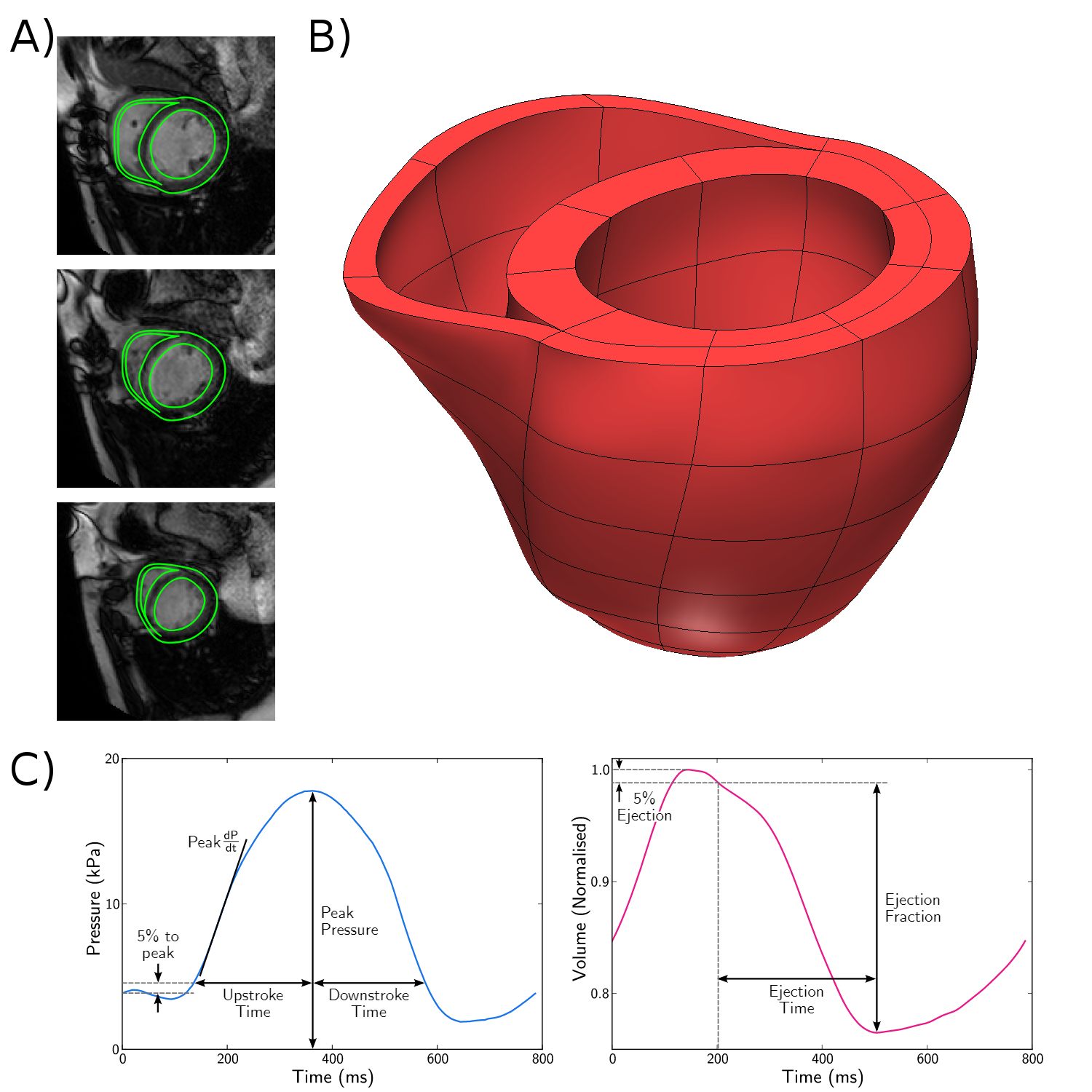}
\caption{}
\label{fig:cardiac model}
\end{subfigure}
\begin{subfigure}[t]{0.6\textwidth}
\includegraphics[clip,trim = 0cm 0.5cm 0cm 0cm,width = 0.99\textwidth]{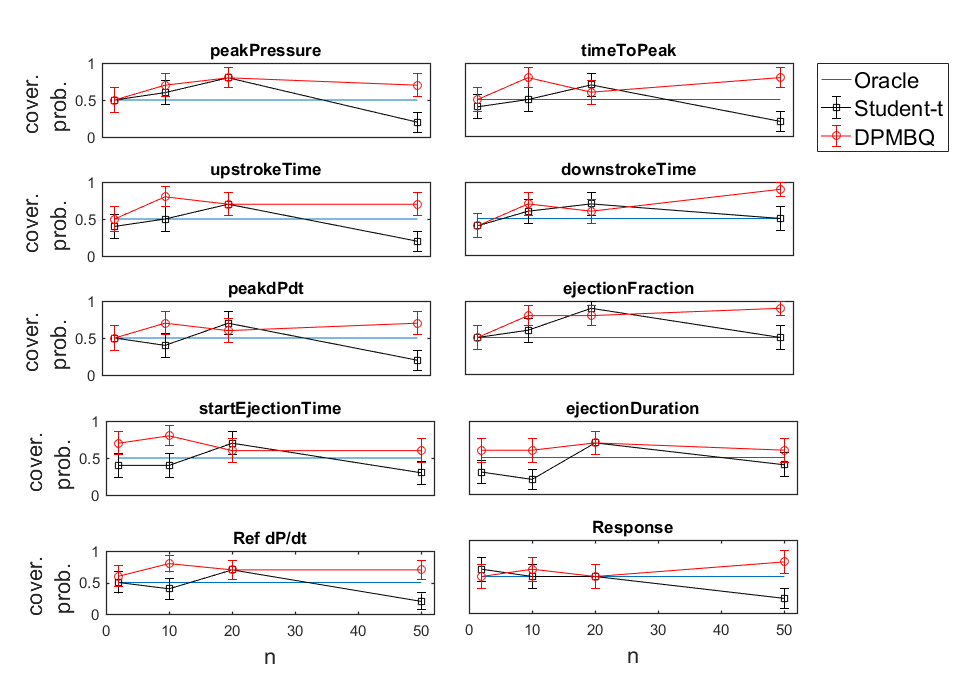}
\caption{}
\label{fig:results}
\end{subfigure}
\caption{Cardiac model results:
(a) Computational cardiac model. A) Segmentation of the cardiac MRI. B) Computational model of the left and right ventricles. C) Schematic image showing the features of pressure (left) and volume transient (right).
(b) Comparison of coverage frequencies, for each of 10 numerical integration tasks defined by functionals $g_j$ of the cardiac model output.
}
\end{figure}

\paragraph{Empirical Results}

For each of the 10 numerical integration problems in the test bed, we computed coverage probabilities, estimated with 100 independent realisations (standard errors are shown), in line with those discussed for simulation experiments.
These are shown in Fig. \ref{fig:results}, where we compared Eqn. \ref{MC CI} with central 50\% posterior credible intervals produced under DPMBQ.
It is seen that Eqn. \ref{MC CI} is usually reliable but \emph{can} sometimes be over-confident, with coverage probabilities less than $50\%$.
This over-confidence can lead to spurious conclusions on the predictive performance of the computational model.
In contrast, DPMBQ provides a uniformly conservative quantification of numerical error (cover. prob. $\geq 50\%$).

The DPMBQ method is further distinguished from Eqn. \ref{MC CI} in that it entails a \emph{joint} distribution for the 10 integrals (the unknown $p$ is shared across integrals - an instance of transfer learning across the 10 integration tasks).
Fig. \ref{fig:results} also appears to show a correlation structure in the standard approach (black lines), but this is an artefact of the common sample set $\{x_i\}_{i=1}^n$ that was used to simultaneously estimate all 10 integrals; Eqn. \ref{MC CI} is still applied \emph{independently} to each integral.

\section{Discussion} \label{sec:discussion}

Numerical analysis often focuses the convergence order of numerical methods, but in non-asymptotic regimes the language of probabilities can provide a richer, more intuitive and more useful description of numerical error.
This paper cast the computation of integrals $p(f)$ as an estimation problem amenable to Bayesian methods \cite{Kadane1985,Diaconis1988,Cockayne2017}.
The difficulty of this problem depends on our level of prior knowledge (rendering the problem trivial if a closed-form solution is {\it a priori} known) and, in the general case, on how much information we are prepared to obtain on the objects $f$ and $p$ through numerical computation \cite{Hennig2015}.
In particular, we distinguish between three states of prior knowledge: (1) $f$ known, $p$ unknown, (2) $f$ unknown, $p$ known, (3) both $f$ and $p$ unknown.
Case (1) is the subject of Monte Carlo methods \cite{Robert2013} and concerns classical problems in applied probability such as estimating confidence intervals for expectations based on Markov chains.
Notable recent work in this direction is \cite{Delyon2016}, who obtained a point estimate $\hat{p}$ for $p$ using a kernel smoother and then, in effect, used $\hat{p}(f)$ as an estimate for the integral.
The decision-theoretic risk associated with error in $\hat{p}$ was explored in \cite{Cohen2016}.
Independent of integral estimation, there is a large literature on density estimation \cite{Wand1994}.
Our probabilistic approach provides a Bayesian solution to this problem, as a special case of our more general framework.
Case (2) concerns functional analysis, where \cite{Novak2010} provide an extensive overview of theoretical results on approximation of unknown functions in an information complexity framework.
As a rule of thumb, estimation improves when additional smoothness can be {\it a priori} assumed on the value of the unknown object \citep[see][]{Briol2016}.
The main focus of this paper was Case (3), until now unstudied, and a transparent, general statistical method called DPMBQ was proposed.

The path-finding nature of this work raises several important questions for future theoretical and applied research.
First, these methods should be extended to account for the low-rank phenomenon that is often encountered in multi-dimensional integrals \cite{Dick2013}.
Second, there is no reason, in general, to restrict attention to function values obtained at the locations in $X$.
Indeed, one could first estimate $p(\mathrm{d}x)$, then select suitable locations $X'$ from at which to evaluate $f(X')$.
This touches on aspects of statistical experimental design; the practitioner seeks a set $X'$ that minimises an appropriate loss functional at the level of $p(f)$; see again \cite{Cohen2016}.
Third, whilst restricted to Gaussians in our experiments, further methodological work will be required to establish guidance for the choice of kernel $k$ in the GP and choice of base distribution $P_b$ in the DPM \citep[c.f. chapter 4 of][]{Rasmussen2006}.

There is an urgent ethical imperative to account for confounding due to numerical error in cardiac model assessment \cite{TJP:TJP7214}.
To address this problem, we have proposed the DPMBQ method.
However, the method should be of independent interest in machine learning for computer models in general \citep[e.g.][]{Gramacy2009}.

\subsubsection*{Acknowledgments}

CJO and MG were supported by the Lloyds Register Foundation Programme on Data-Centric Engineering. SN was supported by an EPSRC Intermediate Career Fellowship. FXB was supported by the EPSRC grant [EP/L016710/1]. MG was supported by the EPSRC grants [EP/K034154/1, EP/R018413/1, EP/P020720/1, EP/L014165/1], and an EPSRC Established Career Fellowship, [EP/J016934/1]. This material was based upon work partially supported by the National Science Foundation (NSF) under Grant DMS-1127914 to the Statistical and Applied Mathematical Sciences Institute. Opinions, ﬁndings, and conclusions or recommendations expressed in this material are those of the author(s) and do not necessarily reﬂect the views of the NSF.

\appendix
\section{Supplemental Text}

This supplement contains proofs, additional derivations and experimental results that complement the material in the Main Text.

\subsection{Proof of Theorem} 

Denote by $p_0$ the true distribution that gives rise to the observations in $X$.
Consider inference for $p_0$ under the DPM model for $X$.
Let $\mu_0(x) = p_0(k(\cdot,x)) \in \mathcal{H}$ denote the exact kernel mean.
Let $\|\cdot\|_{\mathcal{H}}$ and $\langle \cdot , \cdot \rangle_{\mathcal{H}}$ denote the norm and inner product associated with $\mathcal{H}$.
An important bound is derived from Cauchy-Schwarz:
\begin{eqnarray*}
\left| p_0(f_0) - \sum_{i=1}^n w_i f_0(x_i) \right| \leq \|f_0\|_{\mathcal{H}} \left\| \mu_0 - \sum_{i=1}^n w_i k(\cdot,x_i) \right\|_{\mathcal{H}}
\end{eqnarray*}
This motivates us to study approximation of the kernel mean $\mu_0$ in a Hilbert space context.
Let $\mu(x) = p(k(\cdot,x)) \in \mathcal{H}$ be the generic unknown kernel mean in the case where $p$ is an uncertain distribution.
The reproducing property in $\mathcal{H}$ can be used to bound kernel mean approximation error:
\begin{eqnarray*}
\|\mu_0 - \mu\|_{\mathcal{H}}^2 & = & \langle \mu_0 - \mu , \mu_0 - \mu \rangle_{\mathcal{H}} \\
& = & \left\langle \int k(\cdot,x) (p_0(x) - p(x)) \mathrm{d}x , \right. \\
& & \left. \hspace{50pt} \int k(\cdot,x') (p_0(x') - p(x')) \mathrm{d}x' \right\rangle_{\mathcal{H}} \\
& = & \iint \langle k(\cdot,x) , k(\cdot,x') \rangle_{\mathcal{H}} (p_0(x) - p(x))  (p_0(x') - p(x'))  \quad \mathrm{d}x \mathrm{d}x'  \\
& \leq & \sup_{x,x' \in \Omega} |k(x,x')| \times \|p_0 - p\|_1^2 \\
& \leq & 4 \sup_{x,x' \in \Omega} |k(x,x')| \times d_{\text{Hell}}(p_0,p)^2 .
\end{eqnarray*}
The DPM model provides a posterior distribution over $p(\mathrm{d}x)$; in turn this implies a posterior distribution over the kernel mean $\mu(x)$.
Denote the Hellinger distance $d_{\text{Hell}}(p_0,p)$ and recall that, for two densities $p_0$, $p$, we have $\|p_0 - p\|_1 \leq 2 d_{\text{Hell}}(p_0,p)$.
Under assumptions (A2-5) of the theorem, \citep[][Thm. 6.2]{Ghosal2001} established that the DP location-scale mixture model satisfies $d_{\text{Hell}}(p_0,p) = O_P(n^{-1/2 + \epsilon})$, where $\epsilon > 0$ denotes a generic positive constant that can be arbitrarily small.
Thus, in the posterior, $\|\mu_0 - \mu\|_{\mathcal{H}}^2 = O_P(n^{-1 + \epsilon})$.

Let $\mu_{0,n}(\cdot) = \mu_0(X) k(X,X)^{-1} k(X,\cdot) \in \mathcal{H}$.
The idealised BQ posterior, where $p(\mathrm{d}x)$ is known, takes the form
$$
[p(f) \; | \; p_0, X, f_0(X)] = \text{N}( \langle f_0 , \mu_{0,n} \rangle_{\mathcal{H}} , \|\mu_0 - \mu_{0,n}\|_{\mathcal{H}}^2),
$$
as shown in \cite{Briol2016}.
Let $\mu_n(\cdot) = \mu(X) k(X,X)^{-1} k(X,\cdot) \in \mathcal{H}$.
For the DPMBQ posterior, where $p(\mathrm{d}x)$ is unknown, we have the conditional distribution
$$
[p(f) \; | \; p, X, f_0(X)] = \text{N}( \langle f_0 , \mu_n \rangle_{\mathcal{H}} , \|\mu - \mu_n\|_{\mathcal{H}}^2).
$$
Our aim is to relate the DPMBQ posterior to the idealised BQ posterior.
To this end, it is claimed that:
\begin{eqnarray}
\|\mu_0 - \mu_n\|_{\mathcal{H}}^2 \leq \|\mu_0 - \mu_{0,n}\|_{\mathcal{H}}^2 + n^{1/2} \|\mu_0 - \mu\|_{\mathcal{H}}^2 . \label{claim eqn}
\end{eqnarray}
Here we have decomposed the estimation error $\mu_0 - \mu_n$ into a term $\mu_0 - \mu_{0,n}$, that represents the error of the idealised BQ method, and a term $\mu_0 - \mu$ that captures the fact that the true mean element $\mu_0$ is unknown.

To prove the claim, we follow Lemma 2 in \cite{Briol2016}:
Write $\epsilon(X) = \mu(X) - \mu_0(X)$ and deduce that
\begin{eqnarray}
& & \hspace{-25pt} \|\mu_0 - \mu_n\|_{\mathcal{H}}^2 \nonumber \\
& = & \left\| \int k(x,\cdot) p_0(\mathrm{d}x) - \mu(X)^\top k(X,X)^{-1} k(X,\cdot) \right\|_{\mathcal{H}}^2 \nonumber \\
& = & p_0 \otimes p_0(k) - 2 \mu(X)^\top k(X,X)^{-1} \mu_0(X) + \mu(X)^\top k(X,X)^{-1} \mu(X) \nonumber \\
& = & p_0 \otimes p_0(k) - 2 (\epsilon(X) + \mu_0(X))^\top k(X,X)^{-1} \mu_0(X) \nonumber \\
& & \hspace{50pt} + (\epsilon(X) + \mu_0(X))^\top k(X,X)^{-1} (\epsilon(X) + \mu_0(X)) \nonumber \\
& = & p_0 \otimes p_0(k) - 2 \mu_0(X)^\top k(X,X)^{-1} \mu_0(X) + \mu_0(X)^\top k(X,X)^{-1} \mu_0(X) \nonumber \\
& & \hspace{50pt} + \epsilon(X)^\top k(X,X)^{-1} \epsilon(X) \nonumber \\
& = & \|\mu_0 - \mu_{0,n}\|_{\mathcal{H}}^2 + \epsilon(X)^\top k(X,X)^{-1} \epsilon(X) . \label{decomp eqn}
\end{eqnarray}
Let $\mathcal{H} \otimes \mathcal{H}$ denote the tensor product of Hilbert spaces \citep[][Sec. 1.4.6]{Berlinet2011}.
Then the second term in Eqn. \ref{decomp eqn} is non-negative and can be bounded using the reproducing properties of both $\mathcal{H}$ and $\mathcal{H} \otimes \mathcal{H}$:
\begin{eqnarray*}
 \epsilon(X)^\top k(X,X)^{-1} \epsilon(X) & = & \sum_{i,i'=1}^n [k(X,X)^{-1}]_{i,i'} \langle \mu - \mu_0 , k(\cdot,x_i) \rangle_{\mathcal{H}} \langle \mu - \mu_0 , k(\cdot,x_{i'}) \rangle_{\mathcal{H}} \\
& = & \left\langle (\mu - \mu_0) \otimes (\mu - \mu_0) , \sum_{i,i'=1}^n \begin{array}{ll} [k(X,X)^{-1}]_{i,i'} \\ \hspace{30pt} \times k(\cdot,x_i) \otimes k(\cdot,x_{i'}) \end{array} \right\rangle_{\mathcal{H} \otimes \mathcal{H}} \\
& \leq & \|\mu_0 - \mu\|_{\mathcal{H}}^2 \left\| \sum_{i, i' = 1}^n [k(X,X)^{-1}]_{i,i'} k(\cdot, x_i) \otimes k(\cdot, x_{i'}) \right\|_{\mathcal{H} \otimes \mathcal{H}} ,
\end{eqnarray*}
where the final inequality is Cauchy-Schwarz.
The latter factor evaluates to $n^{1/2}$, again using the reproducing property for $\mathcal{H} \otimes \mathcal{H}$:
\begin{eqnarray*}
& & \hspace{-20pt} \left\| \sum_{i = 1}^n \sum_{i' = 1}^n  [k(X,X)^{-1}]_{i,i'} k(\cdot, x_i) \otimes k(\cdot, x_{i'}) \right\|_{\mathcal{H} \otimes \mathcal{H}}^2 \\
& = & \sum_{i,i',j,j'} \begin{array}{ll} [k(X,X)^{-1}]_{i,i'} [k(X,X)^{-1}]_{j,j'} \\ \hspace{30pt} \times \langle k(\cdot,x_i) \otimes k(\cdot,x_{i'}) , k(\cdot,x_j) \otimes k(\cdot,x_{j'}) \rangle_{\mathcal{H} \otimes \mathcal{H}} \end{array} \\
& = & \sum_{i,i',j,j'} [k(X,X)^{-1}]_{i,i'} [k(X,X)^{-1}]_{j,j'} [k(X,X)]_{i,j} [k(X,X)]_{i',j'}  \\
& = & \text{tr}[k(X,X) k(X,X)^{-1} k(X,X) k(X,X)^{-1}] \\
& = & n.
\end{eqnarray*}
This establishes that the claim holds.

From Lemmas 1 and 3 in \cite{Briol2016}, we have that the idealised BQ estimate based on the bounded kernel $k$ satisfies $\|\mu_0 - \mu_{0,n}\|_{\mathcal{H}} = O_P(n^{-1/2})$.
Indeed, $\|\mu_0 - \mu_{0,n}\|_{\mathcal{H}} \leq \|\mu_0 - \hat{\mu}_{0,n}\|_{\mathcal{H}}$, where 
$$
\hat{\mu}_{0,n} = \frac{1}{n} \sum_{i=1}^n k(\cdot,x_i)
$$ 
is the Monte Carlo estimate for the kernel mean \citep[Lemma 3 of][]{Briol2016}.
As $k$ is bounded, the norm $\|\mu_0 - \hat{\mu}_{0,n}\|_{\mathcal{H}}$ vanishes as $O_P(n^{-1/2})$ \citep[Lemma 1 of][]{Briol2016}.
Combining the above results in Eqn. \ref{claim eqn}, we obtain
\begin{eqnarray*}
\|\mu_0 - \mu_n\|_{\mathcal{H}}^2 & = & O_P(n^{-1}) + n^{1/2} \times O_P(n^{-1 + \epsilon}) \\
& = & O_P(n^{-1/2 + \epsilon}) .
\end{eqnarray*}
To finish, recall that for DPMBQ we have the random variable representation
\begin{eqnarray*}
p(f) & = & \langle f_0 , \mu_n \rangle_{\mathcal{H}} + \|\mu - \mu_n\|_{\mathcal{H}} \; \xi ,
\end{eqnarray*}
where $\xi \sim \text{N}(0,1)$ is independent of $X$.
Thus, from the triangle inequality followed by Cauchy-Schwarz:
\begin{eqnarray*}
|p_0(f_0) - p(f)| & = & |\langle f_0 , \mu_0 \rangle_{\mathcal{H}} - \langle f_0 , \mu_n \rangle_{\mathcal{H}} - \|\mu - \mu_n\|_{\mathcal{H}} \; \xi| \\
& \leq & |\langle f_0 , \mu_0 - \mu_n \rangle_{\mathcal{H}}| + \|\mu - \mu_n\|_{\mathcal{H}} \; |\xi| \\
& \leq & |\langle f_0 , \mu_0 - \mu_n \rangle_{\mathcal{H}}| + [\|\mu - \mu_0\|_{\mathcal{H}} + \|\mu_0 - \mu_n\|_{\mathcal{H}}] \; |\xi| \\
& \leq & \|f_0\|_{\mathcal{H}} \|\mu_0 - \mu_n\|_{\mathcal{H}} + O_P(n^{-1/2 + \epsilon}) + O_P(n^{-1/4 + \epsilon}) \\
& = & O_P(n^{-1/4 + \epsilon}) .
\end{eqnarray*}
Denote the DPMBQ posterior distribution with $\mathbb{P}_n = [p(f) \; | \; X, f(X)]$.
Then for $\delta > 0$ fixed, the posterior mass $\mathbb{P}_n[(\infty , p_0(f_0) - \delta) \cup (p_0(f_0) + \delta , \infty)] = O_P(n^{-1/4 + \epsilon})$. 
This completes the proof.

\subsection{Computational Details} \label{direct}

This section describes the computation for DPMBQ.
The model admits the following straight-forward sampler:
\begin{enumerate}
\item draw $\theta$ from the hyper-prior $[\theta]$
\item draw $\phi_{1:n}$ from $[\phi_{1:n} \; | \; X, \theta]$ (via a Gibbs sampler)
\item draw $p$ from $[\mathrm{d}p \; | \; \phi_{1:n}]$ (via stick-breaking)
\item draw $p(f)$ from $[p(f) \; | \; X, f(X), p, \theta ]$ (via BQ)
\end{enumerate}
For step (2), it is convenient (but not essential) to use a conjugate base distribution $P_b$.
In the case of a Gaussian model $\psi$, the normal inverse-gamma distribution, parametrised with $\mu_0 \in \mathbb{R}$, $\lambda_0 , \alpha_0 , \beta_0 \in (0,\infty)$, permits closed-form conditionals and facilitates an efficient Gibbs sampler.
Full details are provided in supplemental Sec. \ref{Gibbs sec}.
(Note that the conjugate base distribution does not fall within the scope of the theorem; however the use of a more general Metropolis-within-Gibbs scheme enables computation from such models with trivial modification.)
In all experiments below we fixed hyper-parameters to default values $\lambda_0 = \alpha_0 = \beta_0 = 1$, $\mu_0 = 0$; there was no noticeable dependence of inferences on these choices, which are several levels removed from $p(f)$, the unknown of interest.

This direct scheme admits several improvements: e.g. (a) stratified or QMC sampling of $\theta$ in step (1); (b) Rao-Blackwellisation of the additional randomisation in $p(f)$, to collapse steps (3) and (4) \cite{Blackwell1947}; (c) the Gibbs sampler of \cite{Escobar1995} can be replaced by more sophisticated alternatives, such as \cite{Neal2000}.
Indeed, one need not sample from the prior $[\theta]$ and instead target the hyper-parameter posterior with MCMC.
In experiments, the straight-forward scheme outlined here was more than adequate to obtain samples from the DPMBQ model. 
Thus we implemented this basic sampler and leave the above extensions as possible future work.

\subsubsection{Gibbs Sampler} \label{Gibbs sec}

This section derives the conditional distributions that are needed for an efficient Gibbs sampler that targets $[\phi \; | \; X , \theta]$.
The main result is presented in the proposition below:

\begin{proposition*}
Consider the multivariate Gaussian model $\psi(\mathrm{d}x ; \phi) = \text{\emph{N}}(\mathrm{d}x | \phi_1 , \text{\emph{diag}}(\phi_2))$, with mean vector $\phi_1 \in \mathbb{R}^d$ and marginal variance vector $\phi_2 \in \mathbb{R}^d$.
Consider the base distribution $P_b(\mathrm{d}\phi)$ composed of independent normal inverse-gamma $\text{\emph{NIG}}(\phi_{1,k} , \phi_{2,k} | \mu_0 , \lambda_0 , \alpha_0 , \beta_0)$ components with $\mu_0 \in \mathbb{R}$, $\lambda_0 , \alpha_0 , \beta_0 \in (0,\infty)$ for $k = 1,\dots,d$.
Denote $\phi_i = (\phi_{i,1},\phi_{i,2})$ and $\phi_{(-i)} = (\phi_1,\dots,\phi_{i-1} , \phi_{i+1}, \dots, \phi_n)$.
For this conjugate choice, we have the closed-form posterior conditional
$$
[\phi_i \; | \; \phi_{(-i)} , X, \theta] = \omega_0 Q_i + \sum_{j \neq i}\omega_j \delta_{\phi_j}
$$
where $Q_i$ is composed of independent $\text{\emph{NIG}}(\phi_{i,1,k} , \phi_{i,2,k} | \mu_{i,k} , \lambda_{i,k} , \alpha_{i,k} , \beta_{i,k} )$ components and
\begin{eqnarray*}
\left[ \begin{array}{c} \omega_0 \\ \omega_j \end{array} \right] & \propto & \left[ \begin{array}{c} \alpha \prod_{k = 1}^d \frac{1}{2 \pi^{1/2}} \frac{\lambda_0^{1/2}}{\lambda_{i,k}^{1/2}} \frac{\beta_0^{\alpha_0}}{\beta_{i,k}^{\alpha_{i,k}}} \frac{\Gamma(\alpha_{i,k})}{\Gamma(\alpha_0)} \\ \text{\emph{N}}(x_i | \phi_{j,1} , \text{\emph{diag}}(\phi_{j,2}) ) \end{array} \right] \\
\mu_{i,k} & = & \frac{\lambda_0 \mu_0 + x_{i,k}}{\lambda_0 + 1} \\
\lambda_{i,k} & = & \lambda_0 + 1 \\
\alpha_{i,k} & = & \alpha_0 + \frac{1}{2} \\
\beta_{i,k} & = & \beta_0 + \frac{1}{2}(\lambda_0 \mu_0^2 + x_{i,k}^2 - \lambda_{i,k} \mu_{i,k}^2).
\end{eqnarray*}
\end{proposition*}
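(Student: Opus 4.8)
The plan is to combine two classical ingredients: the P\'olya urn (Blackwell--MacQueen) characterisation of the Dirichlet process, which lets me integrate out the random mixing distribution $P$ analytically, and the conjugacy of the normal inverse-gamma base distribution with the Gaussian kernel $\psi$. First I would marginalise $P$. For $\phi_i \sim P$ with $P \sim \text{DP}(\alpha, P_b)$, integrating out $P$ from the finite-dimensional Dirichlet marginals recalled earlier in the paper yields the exchangeable prior full conditional
$$
[\phi_i \mid \phi_{(-i)}] \;=\; \frac{\alpha}{\alpha + n - 1}\, P_b + \frac{1}{\alpha + n - 1} \sum_{j \neq i} \delta_{\phi_j},
$$
which is the \emph{prior} for $\phi_i$ before the datum $x_i$ is taken into account.

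Next I would apply Bayes' theorem, multiplying this prior by the likelihood $\psi(x_i;\phi_i) = \text{N}(x_i \mid \phi_{i,1}, \text{diag}(\phi_{i,2}))$ and renormalising, so that
$$
[\phi_i \mid \phi_{(-i)}, X, \theta] \;\propto\; \psi(x_i;\phi_i) \Big( \alpha\, P_b(\mathrm{d}\phi_i) + \sum_{j\neq i} \delta_{\phi_j}(\mathrm{d}\phi_i) \Big).
$$
The atomic part is immediate: at $\phi_i = \phi_j$ the likelihood contributes the finite weight $\psi(x_i;\phi_j) = \text{N}(x_i \mid \phi_{j,1}, \text{diag}(\phi_{j,2}))$, which is exactly the unnormalised $\omega_j$. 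The continuous branch $\alpha\, \psi(x_i;\phi_i)\, P_b(\mathrm{d}\phi_i)$ is where conjugacy must be invoked.

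For the continuous branch I would use that both the diagonal-covariance Gaussian likelihood and the product normal inverse-gamma base distribution factorise across coordinates $k = 1,\dots,d$. Coordinate-wise this is the textbook conjugate update of a $\text{NIG}(\mu_0,\lambda_0,\alpha_0,\beta_0)$ prior by a single scalar observation $x_{i,k}$, producing a posterior $\text{NIG}(\mu_{i,k},\lambda_{i,k},\alpha_{i,k},\beta_{i,k})$ with the stated parameters; the product of these posteriors over $k$ is precisely $Q_i$. Writing the NIG density with its normalising constant and integrating out $(\phi_{i,1,k},\phi_{i,2,k})$ then shows that the marginal likelihood $\int \psi(x_i;\phi)\, P_b(\mathrm{d}\phi)$ equals the product over $k$ of the ratio of prior to posterior NIG normalising constants, namely $\prod_k \frac{1}{2\pi^{1/2}} \frac{\lambda_0^{1/2}}{\lambda_{i,k}^{1/2}} \frac{\beta_0^{\alpha_0}}{\beta_{i,k}^{\alpha_{i,k}}} \frac{\Gamma(\alpha_{i,k})}{\Gamma(\alpha_0)}$. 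Multiplying by the concentration weight $\alpha$ gives $\omega_0$; collecting the atom weights $\omega_j$ and normalising the total completes the derivation.

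The routine-but-delicate step is this last marginalisation: verifying that $\int \text{N}(x_{i,k} \mid \mu, \sigma^2)\, \text{NIG}(\mu,\sigma^2 \mid \mu_0, \lambda_0, \alpha_0, \beta_0)\, \mathrm{d}\mu\, \mathrm{d}\sigma^2$ collapses to the quoted ratio of gamma functions and powers of $\lambda$ and $\beta$. This amounts to completing the square in $\mu$ to perform the Gaussian integral, then recognising the remaining $\sigma^2$ integral as an inverse-gamma normalising constant; the only genuine bookkeeping is tracking the constant $2\pi^{1/2}$ and confirming the $\beta$-update $\beta_{i,k} = \beta_0 + \tfrac{1}{2}(\lambda_0\mu_0^2 + x_{i,k}^2 - \lambda_{i,k}\mu_{i,k}^2)$ that emerges from the completed square. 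Everything else is an assembly of standard Dirichlet-process and conjugate-prior facts.
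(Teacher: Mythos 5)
Your proposal is correct and takes essentially the same route as the paper's proof: the paper obtains the prior full conditional $[\phi_i \mid \phi_{(-i)}]$ via Ferguson's ``Bayes' theorem for DPs'' (equivalent to the P\'olya-urn characterisation you invoke), then applies standard Bayes' theorem with the Gaussian likelihood, and finally uses the coordinate-wise normal inverse-gamma conjugate update to identify $Q_i$ and the weight $\omega_0$ as $\alpha$ times the ratio of prior to posterior NIG normalising constants. The two arguments coincide step for step.
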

\begin{proof}
From Theorem 1 of \cite{Ferguson1973}, also known as ``Bayes' theorem for DPs'', we have that the prior $P \sim \text{DP}(\alpha,P_b)$ and the likelihood $\phi_i \sim P$ (independent) lead to a posterior
$$
P \; | \; \phi_{(-i)} \sim \text{DP}\left(\alpha + n - 1 , \frac{1}{\alpha + n - 1} \left( \alpha P_b + \sum_{j \neq i} \delta_{\phi_j} \right) \right).
$$
It follows that, for a measurable set $A$,
\begin{eqnarray*}
\text{Prob}[\phi_i \in A \; | \; \phi_{(-i)}] & = & \mathbb{E}[P(A) \; | \; \phi_{(-i)}] \\
& = & \frac{1}{\alpha + n - 1} \left( \alpha P_b(A) + \sum_{j \neq i} \delta_{\phi_j}(A) \right).
\end{eqnarray*}
From (standard) Bayes' theorem,
\begin{eqnarray*}
[\phi_i \; | \; \phi_{(-i)}] & = & \frac{[X \; | \; \phi_{1:n}] \; [\phi_i \; | \; \phi_{(-i)}]}{[X \; | \; \phi_{(-i)}]} \\
& \propto & [X \; | \; \phi_{1:n}] \; [\phi_i \; | \; \phi_{(-i)}] \; \propto \; [x_i \; | \; \phi_i] \; [\phi_i \; | \; \phi_{(-i)}] 
\end{eqnarray*}
and combining the two above results, in the case of a Gaussian model $\psi(\mathrm{d}x_i ; \phi_i)$ with mean vector $\phi_{i,1}$ and marginal variance vector $\phi_{i,2}$, leads to
\begin{eqnarray*}
[\phi_i \; | \; \phi_{(-i)}] & \propto & \text{N}(x_i | \phi_{i,1} , \text{diag}(\phi_{i,2})) \times \left( \alpha P_b(\phi_i) + \sum_{j \neq i} \delta_{\phi_j}(\phi_i) \right) \\
& = & \alpha \text{N}(x_i | \phi_{i,1} , \text{diag}(\phi_{i,2})) P_b(\phi_i) + \sum_{j \neq i} \text{N}(x_i | \phi_{j,1} , \text{diag}(\phi_{j,2})) \delta_{\phi_j}(\phi_i) ,
\end{eqnarray*}
where $\phi_i = (\phi_{i,1} , \phi_{i,2})$ with $\phi_{i,1} \in \mathbb{R}^d$ and $\phi_{i,2} \in (0,\infty)^d$.

For closed-form expressions, $P_b$ must be taken conjugate to the Gaussian model:
\begin{eqnarray*}
P_b(\phi_i) & = & \prod_{k=1}^d \text{NIG}( \phi_{i,1,k} , \phi_{i,2,k} | \mu_0 , \lambda_0 , \alpha_0 , \beta_0) \\
& = & \prod_{k=1}^d \text{N}(\phi_{i,1,k} | \mu_0 , \lambda_0^{-1} \phi_{i,2,k}) \text{IG}(\phi_{i,2,k} | \alpha_0 , \beta_0) ,
\end{eqnarray*}
in the obvious notation $\phi_{i,j} = (\phi_{i,j,1}, \dots , \phi_{i,j,d})$.
Thus
\begin{eqnarray*}
\text{N}(x_i | \phi_{i,1} , \text{diag}(\phi_{i,2})) P_b(\phi_i) & = & \text{N}(x_i | \phi_{i,1} , \text{diag}(\phi_{i,2})) \\
& & \hspace{50pt} \times \prod_{k=1}^d \text{NIG}(\phi_{i,1,k} , \phi_{i,2,k} | \mu_0 , \lambda_0 , \alpha_0 , \beta_0) \\
& = & \omega_0 \times \prod_{k=1}^d \text{NIG}(\phi_{i,1,k}, \phi_{i,2,k} | \mu_{i,k} , \lambda_{i,k} , \alpha_{i,k} , \beta_{i,k}) ,
\end{eqnarray*}
where 
\begin{eqnarray*}
\omega_0 & = &  \prod_{k = 1}^d \frac{1}{2 \pi^{1/2}} \frac{\lambda_0^{1/2}}{\lambda_{i,k}^{1/2}} \frac{\beta_0^{\alpha_0}}{\beta_{i,k}^{\alpha_{i,k}}} \frac{\Gamma(\alpha_{i,j})}{\Gamma(\alpha_0)} \\
\mu_{i,k} & = & \frac{\lambda_0 \mu_0 + x_{i,k}}{\lambda_0 + 1} \\
\lambda_{i,k} & = & \lambda_0 + 1 \\
\alpha_{i,k} & = & \alpha_0 + \frac{1}{2} \\
\beta_{i,k} & = & \beta_0 + \frac{1}{2}(\lambda_0 \mu_0^2 + x_{i,k}^2 - \lambda_{i,k} \mu_{i,k}^2). 
\end{eqnarray*}
This completes the proof.
\end{proof}

In all experiments the Gibbs sampler was initialised at $\phi_{i,1,k} = x_{i,k}$ and $\phi_{i,2,k} = 1$ and run until a convergence criteria was satisfied.
In this way we produced samples from $[\phi_{1:n} \; | \; X , \theta]$ for the direct sampling scheme outlined in the main text.

\subsubsection{Tensor Structure for Multi-Dimensional Integrals} \label{tensor sec}

This section describes how multi-dimensional integration problems on a tensor-structured domain $\Omega = \Omega_1 \otimes \dots \otimes \Omega_d$ can be decomposed into a tensor product of univariate integration problems.
This construction was used to produce the results in the Main Text, as well as in Sec. \ref{real sec} of the Supplement.

Assume a tensor product kernel
$$
k(x,x') = k_1(x_1,x_1') \times \dots \times k_d(x_d,x_d')
$$
on $\Omega \times \Omega$, together with a product model
$$
\psi(\mathrm{d}x ; \phi) = \psi_1(\mathrm{d}x_1 ; \phi_1) \times \dots \times \psi_d(\mathrm{d}x_d ; \phi_d).
$$
Then a generic draw from $[p \; | \; \phi_{1:n}]$ has the form
$$
p(\mathrm{d}x) = \sum_{j=1}^\infty w_j \psi_1(\mathrm{d}x_1 ; \varphi_{j,1}) \times \dots \times \psi_d(\mathrm{d}x_d ; \varphi_{j,d}) ,
$$
where $\varphi_j \sim P$ are independent with $\varphi_j = (\varphi_{j,1} , \dots , \varphi_{j,d})$, and the corresponding kernel mean is
$$
\mu(x) = \sum_{j=1}^\infty w_j \prod_{i=1}^d \left( \int_{\Omega_i} k_i(x_i,x_i') \psi_i(x_i' ; \varphi_{j,i}) \mathrm{d}x_i' \right).
$$
The initial error $p \otimes p (k)$ is derived as
\begin{eqnarray*}
p \otimes p (k) & = & \sum_{j,j'=1}^\infty w_j w_{j'} \prod_{i=1}^d \\
& & \int_{\Omega_i} k_i(x_i,x_i') \psi_i(x_i ; \varphi_{j,i}) \psi_i(x_i' ; \varphi_{j,i}) \mathrm{d}x_i \mathrm{d}x_i' .
\end{eqnarray*}
For an efficient Gibbs sampler, as in Sec. \ref{Gibbs sec}, the prior model on the mixing distribution $P(\mathrm{d}\phi)$ was taken as a tensor product of $\text{DP}(\alpha,P_{b,i})$ priors where $P_{b,i}(\mathrm{d}x_i)$ is a base distribution on $\Omega_i$.
The experiments of Sec. \ref{real sec} were performed as explained above, where the individual components $k_i$, $\psi_i$ and $P_{b,i}$ were taken to be the same as used for the simulation examples in Sec. \ref{sim sec}.

\subsection{Experimental Set-Up and Results} \label{sim sec}

Two simulation studies were undertaken, based on polynomial test functions where the true integral is known in closed-form (Sec. \ref{poly sec}) and based on differential equations where the true integral must be estimated with brute-force computation (Sec. \ref{real sec}).

\subsubsection{Flexible Polynomial Test Bed} \label{poly sec}

To assess the performance of the DPMBQ method, we considered independent data $x_1,\dots,x_n$ generated from a known distribution $p(\mathrm{d}x)$.
In addition, the function $f(x)$ was fixed and known, so that overall the exact value of the integral $p(f)$ provided a known benchmark.

For illustration, we focused on the generic class of one-dimensional test problems obtained when $p(\mathrm{d}x)$ is a Gaussian mixture distribution
$$
p(\mathrm{d}x) = \sum_{i=1}^m r_i \mathrm{N}(\mathrm{d}x ; c_i, s_i^2)
$$
defined on $\Omega = \mathbb{R}$, where $c_i \in \mathbb{R}$, $r_i,s_i \in [0,\infty)$, $\sum_{i=1}^m r_i = 1$, and the function $f(x)$ is a polynomial
$$
f(x) = \sum_{i=1}^q a_i x^{b_i}
$$
where $a_i \in \mathbb{R}$ and $b_i \in \mathbb{N}_0$.
For this problem class, the integral $p(f)$ is computable in closed-form and the generic approximation properties of Gaussian mixtures and polynomials provide an expressive test-bed.
In addition, the GP prior with mean function $m_\theta(x) = 0$ and Gaussian covariance function
$$
k_\theta(x,x') = \zeta \exp(-(x-x')^2 / 2\lambda^2)
$$
was employed with $\zeta = 1$ fixed.
This choice provides a closed-form kernel mean for assessment purposes, with standard Gaussian calculations analogous to those performed in the Main Text.

\begin{figure*}[t!]
\centering
\includegraphics[width = 0.49\textwidth]{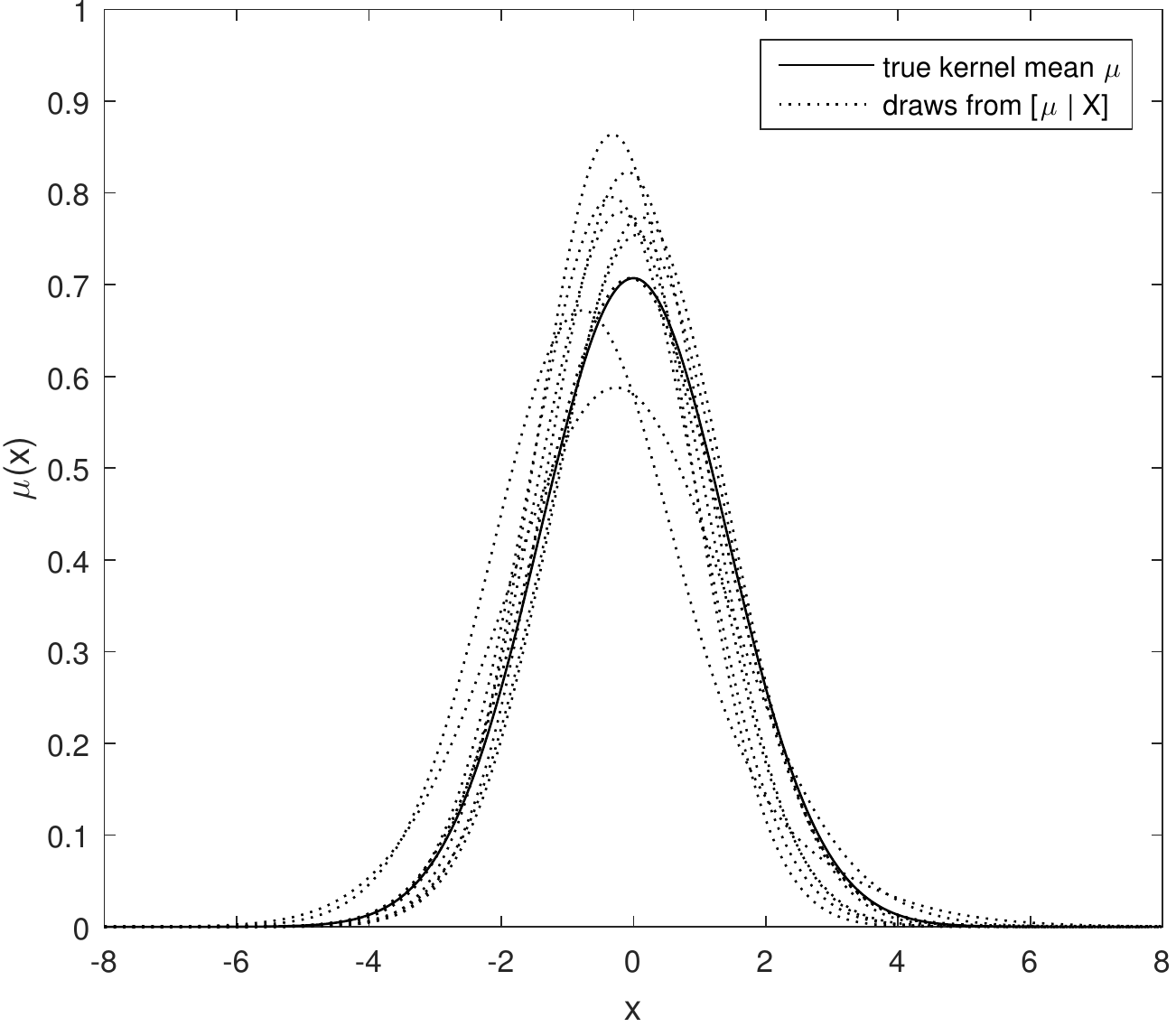}
\includegraphics[width = 0.49\textwidth,clip,trim = 0cm 0.1cm 0cm 0cm]{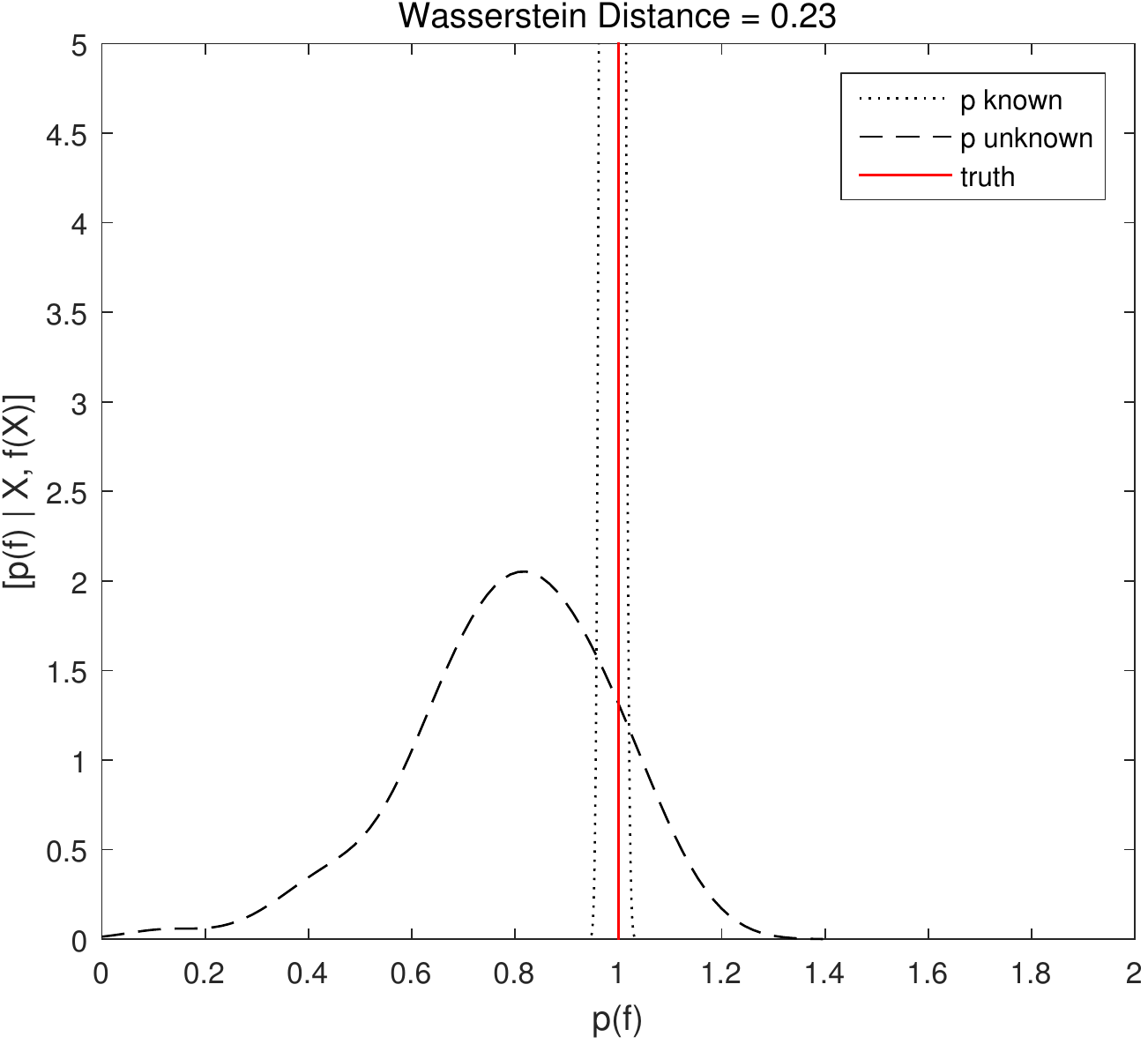}

\vspace{10pt}

\includegraphics[width = 0.49\textwidth]{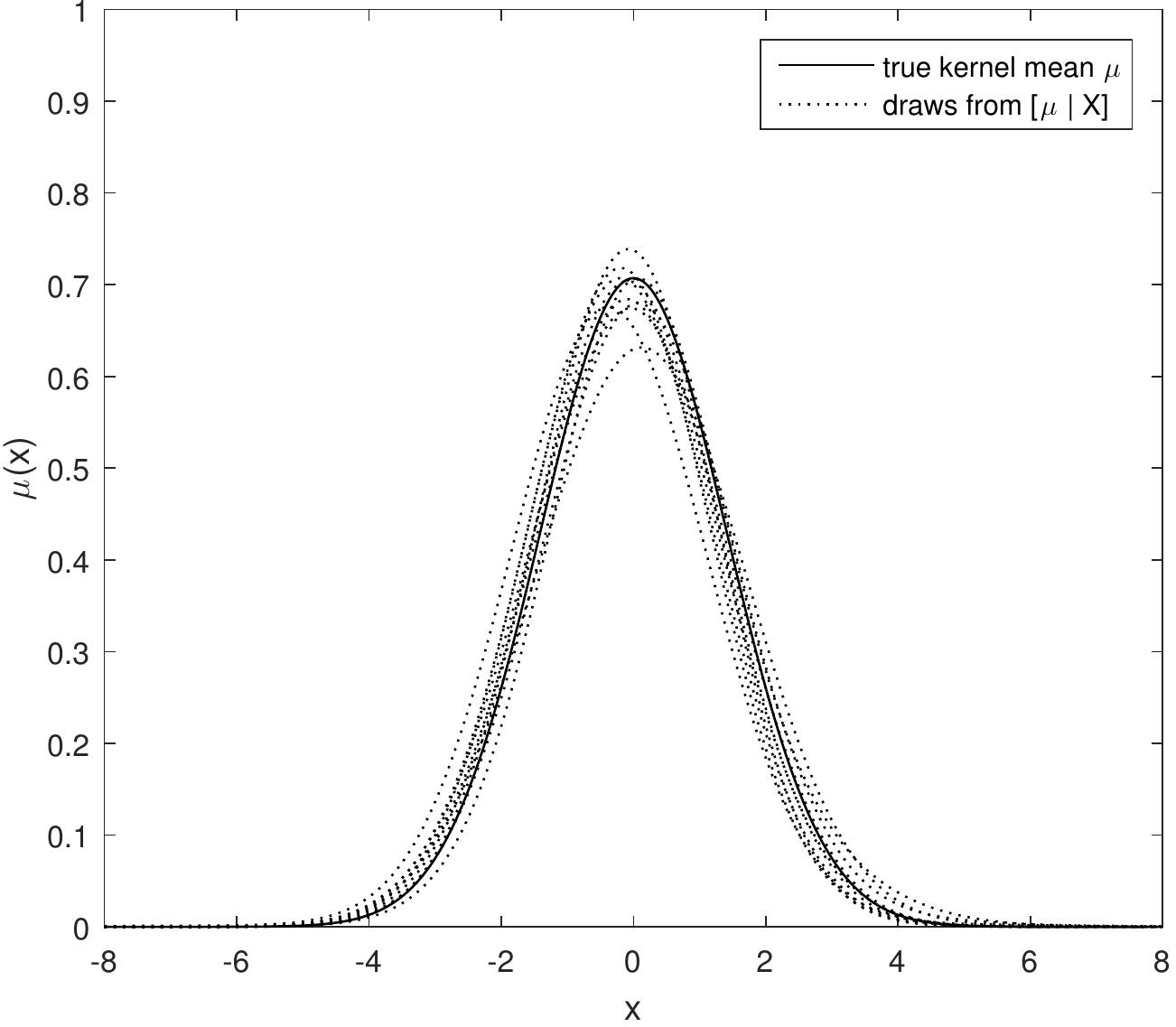}
\includegraphics[width = 0.49\textwidth,clip,trim = 0cm 0.1cm 0cm 0cm]{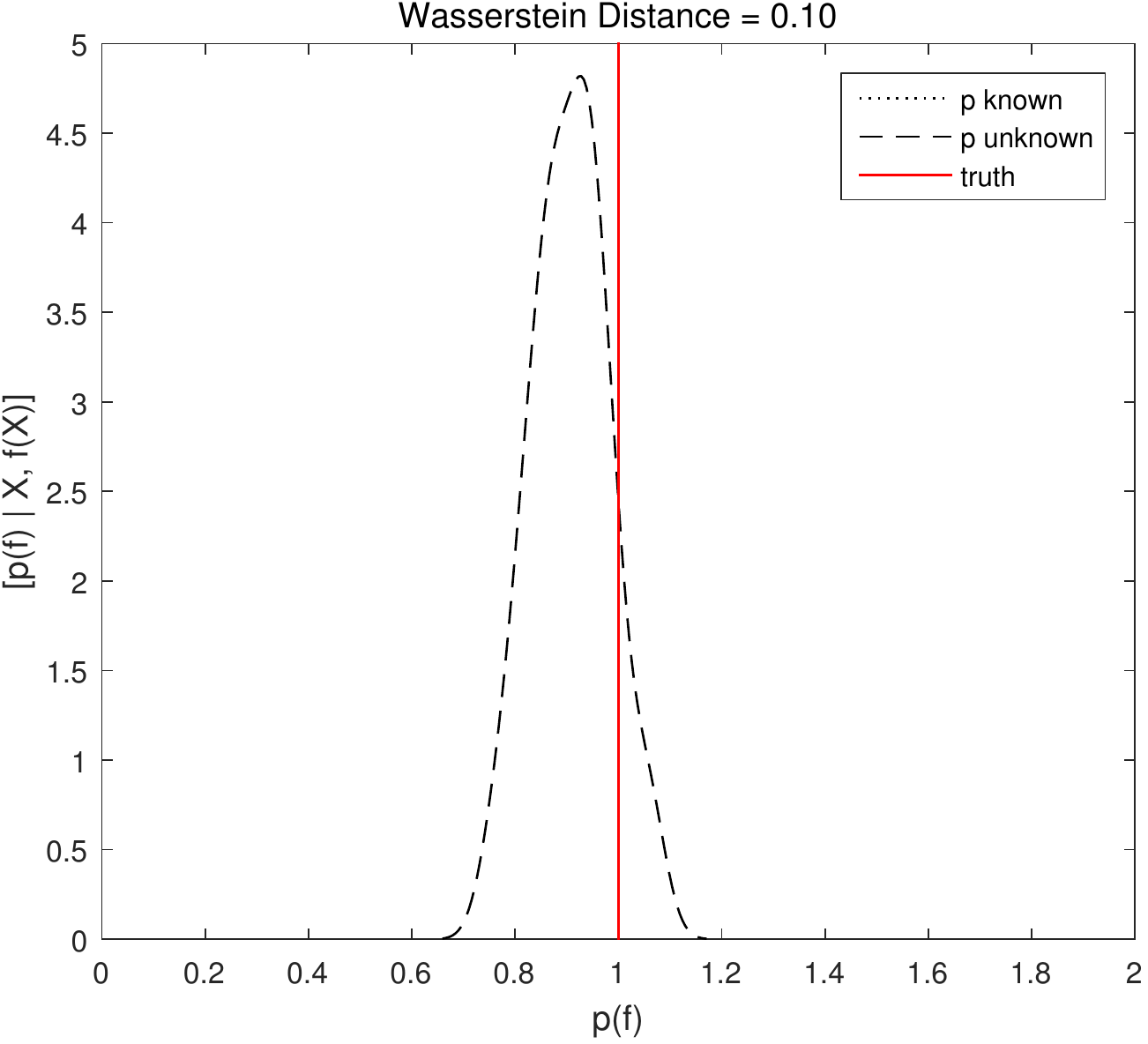}

\caption{Illustration; computation of $p(f)$ where both $f(x)$ and $p(\mathrm{d}x)$ are {\it a priori} unknown.
Partial information on $p(\mathrm{d}x)$ is provided as $n$ draws $x_i \sim p(\mathrm{d}x)$.
Partial information on $f(x)$ is provided by the values $f(x_i)$ at each of the $n$ locations.
Left: Bayesian estimation of the kernel mean $\mu$, that characterises the unknown distribution $p(\mathrm{d}x)$.
Right: Posterior distribution over the value of the integral $p(f)$ (dashed line); for reference, the truth (red line) and the posterior that would be obtained {\it if} $p(\mathrm{d}x)$ was known (dotted line) are also shown.
Two sample sizes, (top) $n=10$, (bottom) $n=100$, are presented.}
\label{fig:illustration DPM}
\end{figure*}

\paragraph{Illustration}

Consider the toy problem where $f(x) = 1 + x - 0.1x^3$, $p(\mathrm{d}x) = \text{N}(\mathrm{d}x;0,1)$, such that the true integral $p(f) = 1$ is known in closed-form.
For the kernel $k_\theta$ we initially fixed the hyper-parameter $\lambda$ at a default value $\lambda = 1$.
The concentration hyper-parameter $\alpha$ was initially fixed to $\alpha = 1$ (the {\it unit information} DP prior).
For all experiments, the stick breaking construction described in the Main Text was truncated after the first $N=500$ terms; at this level results were invariant to further increases in $N$.
In Fig. \ref{fig:illustration DPM} we present realisations of the posterior distributions $[\mu \; | \; X ]$ and $[p(f) \; | \; X, f(X)]$ at two sample sizes, (a) $n = 10$ and (b) $n = 100$.
In this case each posterior contains the true value $p_0(f_0)$ of the integral in its effective support region.
The posterior variance is greatly inflated with respect to the idealised case in which $p(\mathrm{d}x)$, and hence the kernel mean $\mu$, is known.
This is intuitively correct and reflects the increased difficulty of the problem in which both $f(x)$ and $p(\mathrm{d}x)$ are {\it a priori} unknown.

\paragraph{Detailed Results}

To explore estimator convergence in detail, we considered the general simulation set-up above and measured estimator performance with the Wasserstein (or {\it earth movers}') distance:
$$
W = \int |p(f) - p_0(f_0)| \; \mathrm{d}[p(f) \; | \; X, f(X)].
$$
Consistent estimation, as defined in the Main Text, is implied by convergence in Wasserstein distance.
It should be noted that consistent estimation does not imply correct coverage of posterior credible intervals \cite{Ghosal2007}; this aspect is left for future work.

There are three main questions that we address below; these concern dependence of the approximation properties of the posterior $[p(f) \; | \; X, f(X)]$ on (i) the number $n$ of data, (ii) the complexity of the distribution $p(\mathrm{d}x)$, and (iii) the complexity of the function $f(x)$.
Our results can be summarised as follows:

\begin{figure*}[t!]
\centering
\includegraphics[width = 0.48\textwidth]{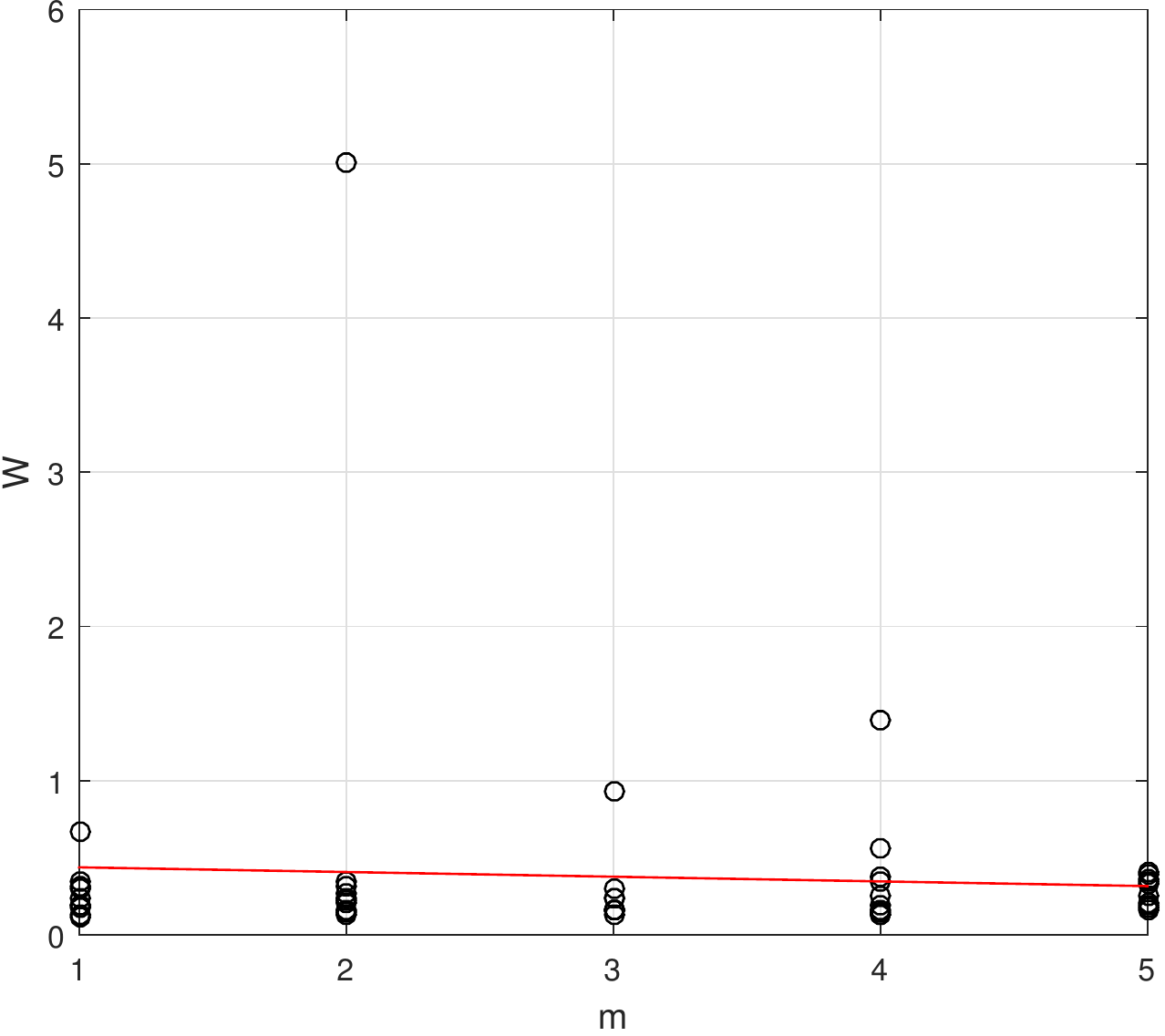}
\includegraphics[width = 0.49\textwidth]{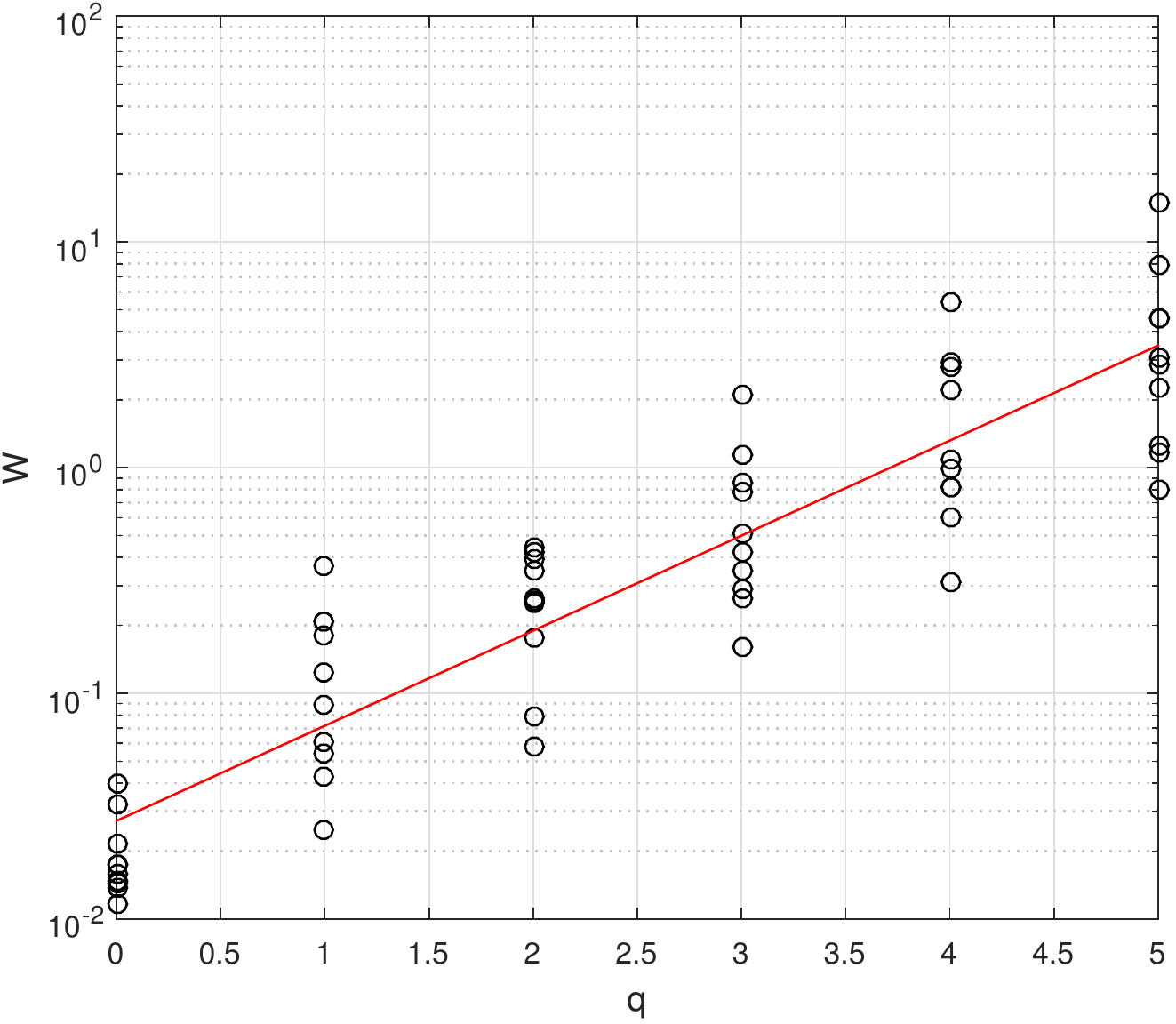}

\caption{Empirical investigation.
The Wasserstein distance, $W$, between the posterior $[p(f) \; | \; X, f(X)]$ and the true value of the integral is presented as a function of (left) the number $m$ of mixture components that constitute $p(\mathrm{d}x)$, and (right) the degree $q$ of the polynomial function $f(x)$ whose integral is to be determined.
[Circles represent independent realisations of $W$, while in (right) a linear trend line (red) is shown.]
}
\label{sims2}
\end{figure*}

\begin{itemize}
\item {\bf Effect of the number $n$ of data:}
As $n$ increases, we expect contraction of the posterior measure over $[\mu \; | \; X]$ onto the true kernel mean.
Hence, in the limit of infinite data, the resultant integral estimates will coincide with those of BQ.
However, the rate of convergence of the proposed method could be much slower compared to the idealised case in which $p(\mathrm{d}x)$, and hence $\mu(x)$, is {\it a priori} known.

The problem of Fig. \ref{fig:illustration DPM} was considered in a more general setting where the hyper-parameters $\theta$ are assigned prior distributions and are subsequently marginalised out.
For these results, the kernel parameter $\lambda$ was assigned a $\text{Gam}(2,1)$ hyper-prior and the concentration parameter $\alpha$ was assigned a $\text{Exp}(1)$ hyper-prior; these were employed for the remainder.

Results in Fig. \ref{sims1} showed that the posterior $[p(f) \; | \; X, f(X)]$ appears to converge to the true value of the integrand (in the Wasserstein sense) as the number $n$ of data are increased.
The slope of the trend line was $\approx -1/4$, in close agreement with the theoretical analysis.
This does not resemble the rapid posterior contraction results established in BQ when $p(\mathrm{d}x)$ is {\it a priori} known, which can be exponential for the Gaussian kernel \cite{Briol2016}.
This reflects the more challenging nature of the estimation problem when $p(\mathrm{d}x)$ is unavailable in closed-form.

\item {\bf Effect of the complexity of $p(\mathrm{d}x)$:}
It is anticipated that a more challenging inference problem for $p(\mathrm{d}x)$ entails poorer estimation performance for $p(f)$.
To investigate, the complexity of $p(\mathrm{d}x)$ was measured as the number $m$ of mixture components.
For this experiment, the number $m$ of mixture components was fixed, with weights $(r_1,\dots,r_m)$ drawn from $\text{Dir}(2)$.
The location parameters $c_i$ were independent draws from $\text{N}(0,1)$ and the scale parameters $s_i$ were independent draws from $\text{Exp}(1)$.

Results in Fig. \ref{sims2} (left), which were based on $n=20$, did not demonstrate a clear effect.
This was interesting and can perhaps be explained by the fact that $\mu(x)$ is a kernel-smoothed version of $p(\mathrm{d}x)$ and thus is somewhat robust to fluctuations in $p(\mathrm{d}x)$.

\item {\bf Effect of the complexity of $f(x)$:}
A more challenging inference problem for $f(x)$ ought to also entails poorer estimation.
To investigate, the complexity of $f(x)$ was measured as the degree $q$ of this polynomial.
For each experiment, $q$ was fixed and the coefficients $a_i$ were independent draws from $\text{N}(0,1)$.

Results in Fig. \ref{sims2} (right), based on $n=20$, showed that the posterior was more accurate for larger $q$, this time in agreement with intuition.
\end{itemize}

\begin{figure}[t!]
\centering
\includegraphics[width = 0.49\textwidth]{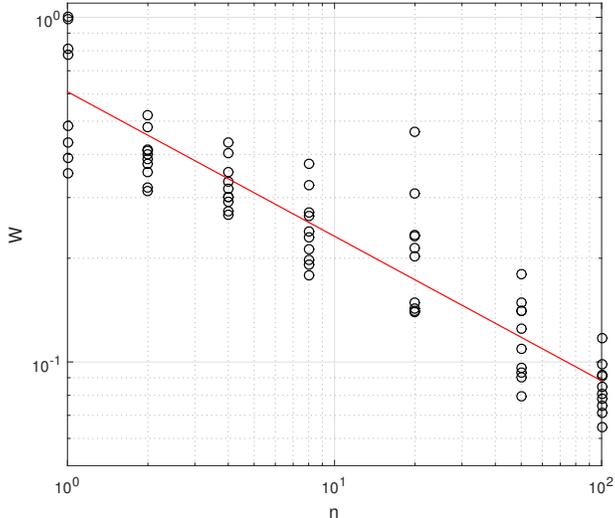}
\caption{Empirical investigation.
The example of Fig. \ref{fig:illustration DPM} was again considered, this time marginalising over hyper-parameters $\lambda$ and $\alpha$.
The Wasserstein distance, $W$, between the posterior $[p(f) \; | \; X, f(X)]$ and the true value of the integral, is presented as a function of the number $n$ of data points.
[Circles represent independent realisations, while a linear trend line (red) is shown.]}
\label{sims1}
\end{figure}

\subsubsection{Goodwin Oscillator} \label{real sec}

Our second simulation experiment considered the computation of Bayesian forecasts based on a 5-dimensional computer model.

For a manageable benchmark we took a computer model that is well-understood; the {\it Goodwin oscillator}, which is prototypical for larger models of complex chemical systems \cite{Goodwin1965}.
The oscillator considers a competitive molecular dynamic, expressed as a system of ordinary differential equations (ODEs), that induces oscillation between the concentration $z_i(t;x)$ of two species $S_i$ ($i = 1,2$).
Parameters, denoted $x$ and {\it a priori} unknown, included two synthesis rate constants, two degradation rate constants and one exponent parameter.
Full details, that include the prior distributions over parameters used in the experiment below, can be found in \cite{Oates2016}.
From an experimental perspective, we suppose that concentrations of both species are observed at 41 discrete time points $t_j$ with uniform spacing in $[0,40]$.
Observation occurred through an independent Gaussian noise process $y_{i,j} = z_i(t_j;x) + \epsilon_{i,j}$ where $\epsilon_{i,j} \sim \text{N}(0,0.1^2)$.
Data-generating parameters were identical to \cite{Oates2016} with model dimension $g=3$.
Fig. \ref{ode DPM} (left) shows the full data $y = (y_{i,j})$.

\begin{figure*}[t!]

\includegraphics[width = 0.49\textwidth,clip,trim = 3.4cm 8.5cm 4cm 9cm]{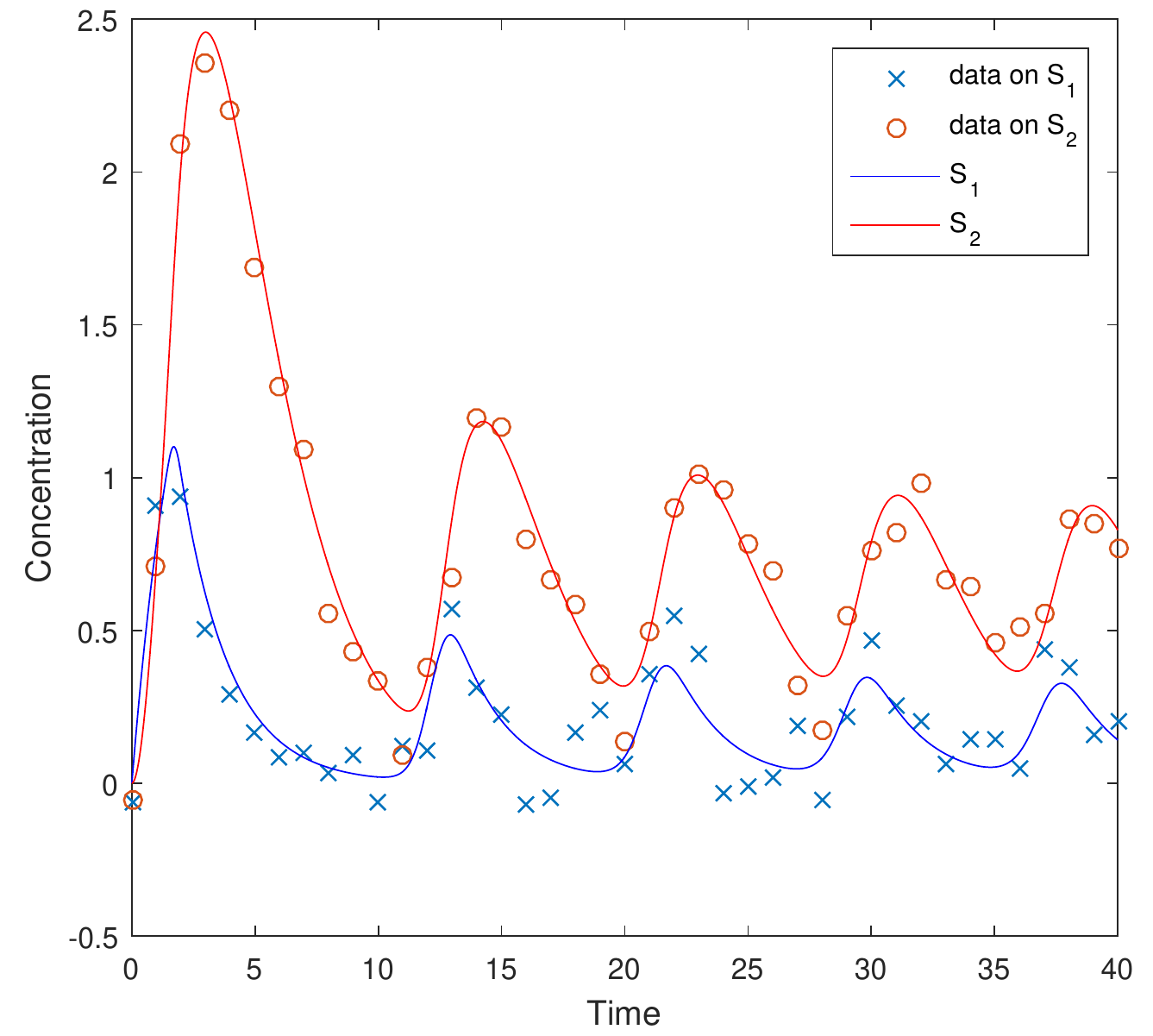}
\includegraphics[width = 0.49\textwidth,clip,trim = 3.4cm 8.5cm 4cm 9cm]{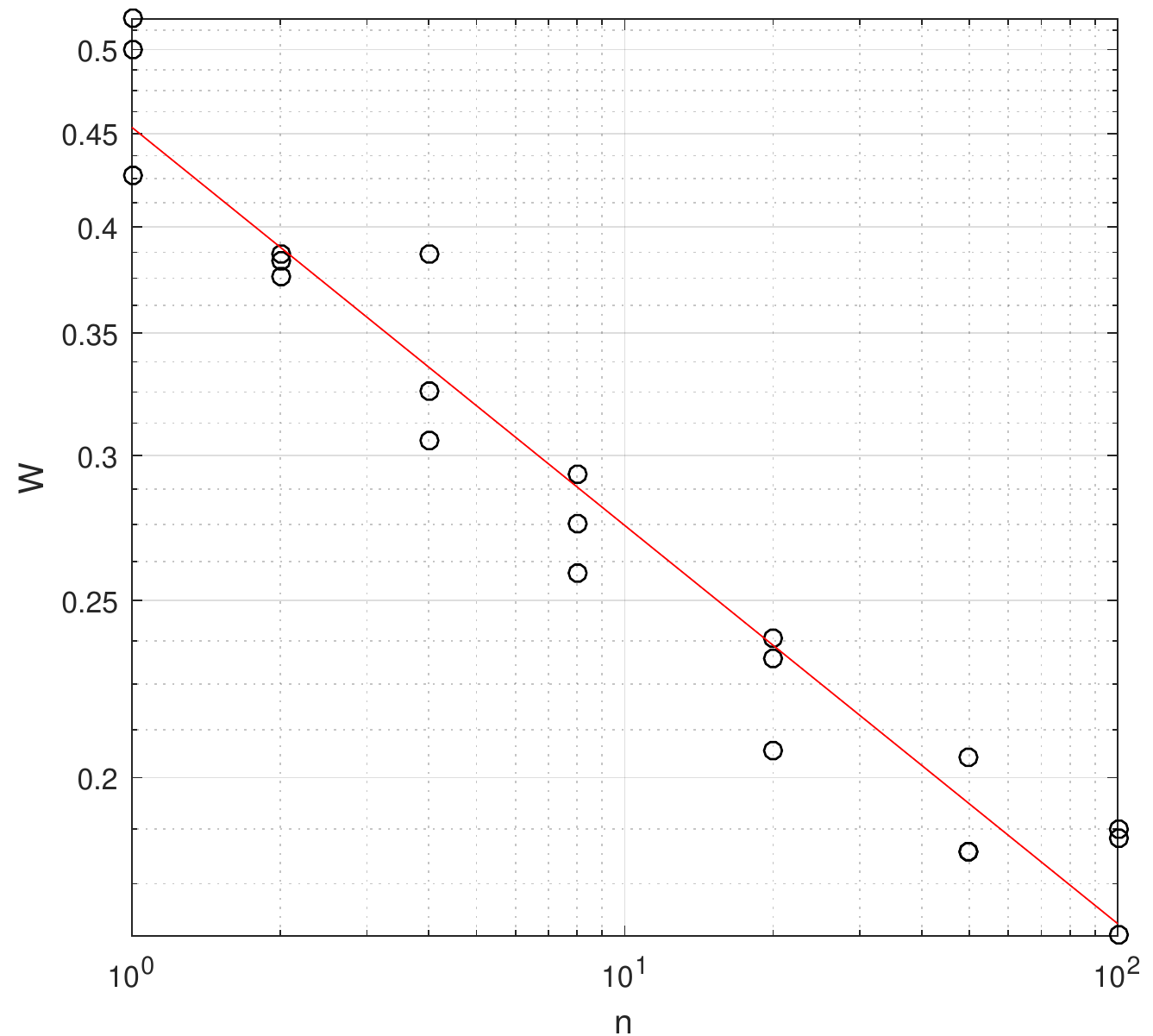}

\caption{Application to Bayesian forecasting. 
Left: Data on two species, $S_1$ and $S_2$, generated from the Goodwin oscillator, a system of differential equations that contain five unknown parameters.
The forecast $p(f)$ under consideration is the posterior expected concentration of species $S_1$ at the later time point $t = 50$.
Right: The Wasserstein distance, $W$, from the proposed posterior $[p(f) \; | \; X, f(X)]$ to the true integral is shown.
Here $n$ represents the number of samples $x_i$ that were obtained from the posterior $[x \; | \; y]$ over the unknown parameters.}
\label{ode DPM}
\end{figure*}

The forecast that we consider here is for the concentration of $S_1$ at the later time $t = 50$.
In particular we defined $f(x)$ to be equal to $z_1(50)$ and obtained $n$ samples $x_i$ from the posterior $[x \; | \; y]$ using tempered population Markov chain Monte Carlo (MCMC), in all aspects identical to \cite{Oates2016}.
Then, $f(x_i)$ was evaluated and stored for each $x_i$; the locations $X = (x_i)$ and function evaluations $f(X)$ are the starting point for the DPMBQ method.

This prototypical model is small enough for numerical error to be driven to zero via repeated numerical simulation of the ODEs, providing us with a benchmark.
Nevertheless, the key features that motivate our work are present here:
(i) The forecast function $f(x)$ is expensive and black-box, being a long-range solution of a system of ODEs and requiring that the global solution error is carefully controlled.
(ii) The task of obtaining samples $x_i$ is costly, as each evaluation of the likelihood $[y \; | \; x]$, and hence the posterior $[x \; | \; y]$, requires the solution of a system of ODEs.

Performance was examined through the Wasserstein distance to the true forecast $p_0(f_0)$, the latter obtained through brute-force simulation.
The multi-dimensional integral was modelled as a tensor product of one-dimensional integrals, as described in Sec. \ref{tensor sec} in the supplement.
This allowed the uni-variate model from Sec. \ref{sim sec} to be re-used at minimal effort.
Results, in Fig. \ref{ode DPM} (right), indicated that the posterior was consistent.
Note that the Wasserstein distances are large for this problem, reflecting the greater uncertainties that are associated with a 5-dimensional integration problem with only $n < 10^2$ draws from $p(\mathrm{d}x)$.

An extension of this framework, not considered here, would use a probabilistic ODE solver in tandem with DPMBQ to model the approximate nature of numerical solution to the ODEs in the reported forecasts \cite{Skilling1992,Hennig2015}.

\subsubsection{Cardiac Model Experiment}

\paragraph{Test Functionals $g_j$ Used in the Cardiac Model Experiment}

The 10 functionals $g_j$, that are the basis for clinical data on the cardiac model in the main text, are defined in the next paragraph:

The left ventricle pressure curve during baseline activation is characterised by the peak value (Peak Pressure), the time of the peak value (Time to Peak) and the time for pressure to rise (Upstroke Time) from 5\% of the pressure change to the peak value and then fall back down (Down Stroke Time). The volume transient is described by the ratio of the left ventricle volume of blood ejected over the maximal left ventricle volume (Ejection Fraction), the time that the ventricle volume has decreased by 5\% of the maximal volume (Start Ejection Time) and the time taken between the start of ejection and the point where the heart reaches its smallest left ventricle volume (Ejection Duration). The effect of pacing the heart is measured by the percentage change in the maximum rate of pressure development at baseline (Ref dPdt) and during pacing (Peak dPdt), defined as the acute haemodynamic response (Response).

\paragraph{Brute-Force Computation for a Benchmark}

The samples $\{x_i\}_{i=1}^n$ from $p(\mathrm{d}x)$ can in principle be obtained via any sophisticated Markov chain Monte Carlo (MCMC) methods, such as \cite{Strathmann2015,Conrad2016}.
Recall that each evaluation of $p(\mathrm{d}x)$ requires $\approx 10^3$ hours, so that the MCMC method must be efficient.
To reduce the computational overhead required for this project, we circumvented MCMC and instead exploited an existing, detailed empirical approximation to $p(\mathrm{d}x)$ that had been pre-computed by a subset of the authors.
This consisted of a collection of $m \approx 10^3$ weighted states $(x_i,p_i)$, where the $x_i$ were selected via an {\it ad-hoc} adaptive Latin hypercube method, and such that the weights $p_i \propto p(x_i)$.
Then, in this work, an (approximate) sample of size $n \ll m$ was obtained by sampling with replacement from the empirical distribution defined by this weighted point set.
For our assessment of DPMBQ, benchmark values for each integral were computed as $\Sigma_{i=1}^m p_i f(x_i)$ for $m \approx 10^3$; note that this required a total of $\approx$ $10^5$ CPU hours and would not be routinely practical.

\bibliographystyle{plain}
\bibliography{bibliography}

\end{document}